\def\QED{\mbox{\rule[0pt]{1.5ex}{1.5ex}}} 
\def\proof{\hspace*{-1.4em}{{\itshape Proof: }}} 
\def\endproof{\hspace*{\fill}~\QED\par\endtrivlist\unskip} 
\newtheorem{definition}{Definition} 
\newtheorem{lemma}{Lemma} 
\newtheorem{property}{Property} 
\newtheorem{theorem}{Theorem} 
\newtheorem{equivalence}{Equivalence} 
\newtheorem{constraint}{Constraint} 
\begin{document}
\RRNo{7588} 
\makeRR   
\tableofcontents
\newpage

\section{Motivations\label{Motivations}}
Coloring has been used in wireless ad hoc and sensor networks to improve communications efficiency by scheduling medium access. Indeed, only nodes that do not interfere are allowed to transmit simultaneously. Hence coloring can be used to schedule node activity. The expected benefits of coloring are threefold:
\begin{enumerate}
\item At the bandwidth level where no bandwidth is lost in collisions, the overhearing and the interferences are reduced. Moreover, the use of the same color by several nodes ensures the spatial reuse of the bandwidth.
\item At the energy level where no energy wasted in collision. Furthermore, nodes can sleep to save energy without loosing messages sent to them because of the schedule based on colors.
\item At the delay level where the end-to-end delays can be optimized by a smart coloring ensuring for instance that any child accesses the medium before its parent in the data gathering tree.\\
\end{enumerate}

However, WSNs (Wireless Sensor Networks) have strong limitations. They have low capacity of storage and computing, low energy especially for battery operated nodes and the network bandwidth is also limited. That is why algorithms supported by WSNs must be of low complexity. More challenging are dense WSNs, where a node cannot maintain its 2-hop neighbors because of memory limitation and a single message cannot contain all the information relative to the 2-hop neighbors of a node. Examples of dense WSNs are given by smart dust where microelectomechanical systems called MEMS can measure temperature, vibration or luminosity. Applications can be  monitoring of building temperature, detection of seismic events, monitoring of pollution, weather prediction for vineyard protection. In this paper, we show how to optimize a coloring algorith for dense WSNs.\\

Concerning coloring algorithms, two types of coloring are distinguished:
node coloring and link coloring. With link coloring, timeslots are assigned per link. Only the transmitter and the receiver, the two nodes of the link are awake, the other nodes can sleep. If the link is lightly loaded, its slot can be underused. Moreover, broadcast communications are not easy: the source must send a copy to each neighbor. On the contrary, with node coloring, the slot is assigned to the transmitter that can use it according to its needs: unicast and/or broadcast transmissions. Hence the slot use is optimized by its owner.\\

The value of $h$ in $h$-hop node coloring depends on the types of communication that must be supported. For instance, broadcast transmissions require 2-hop coloring, whereas unicast transmission with immediate acknowledgement (i.e. the receiver uses the timeslot of the sender to transmit its acknowledgement) requires 3-hop coloring.\\

This paper is organized as follows. First, we define the coloring problem in Section~\ref{Problem} and introduce definitions. We position our work with regard to the state of the art in Section~\ref{StateArt}. In Section~\ref{Complexity}, we prove that the $h$-hop coloring decision problem is NP-complete, for any integer $h>0$. That is why, we propose an heuristic, called SERENA, to color network nodes. The optimization of SERENA, a distributed coloring 3-hop node coloring algorithm, to support dense WSNs is presented in Section~\ref{Optimization}. Performance results, obtained by simulation, are reported in Section~\ref{Performance} for various configurations of wireless sensor networks.  Section~\ref{Theoretical} provides theoretical results related to grid coloring and show how to find the color pattern with the optimal number of colors. These results are used in Section~\ref{Improvement} to deduce a node priority assignment in SERENA for grid topologies that leads to the optimal number of colors. Finally, we conclude in Section~\ref{Conclusion} pointing out future research directions.  

\section {Coloring problem definition \label{Problem}}
Let $G(V,E)$ be a graph representing the network topology. Each vertex $vi \in V$ represents a network node with $i \in [1,n]$, where $n$ is the number of network nodes. For all vertices $v1$ and $v2$ in $V$, the edge $(v1,v2) \in E$ if and only if the two nodes $v1$ and $v2$ are one-hop neighbors.\\ 

There are two types of graph coloring: 
\begin{itemize} 
\item vertex (or node) coloring assigns a color to each vertex of the graph;
\item edge (or link) coloring assigns a color to each vertex of the graph.
\end{itemize}
More precisely, we have the two following definitions:

\begin{definition}
\textbf{One-hop node coloring} of $G$ consists
in (i) assigning each vertex in $V$ a color in such a way that two adjacent vertices have different colors and (ii) using the smallest number of colors.
\end{definition}

\begin{definition}
\textbf{One-hop link coloring} of $G$ consists
in (i) assigning each edge in $E$ a color in such a way that two edges incident to the same vertex have different colors and (ii) using the smallest number of colors.
\end{definition}

We can easily extend one-hop node (respectively link) coloring to $h$-hop node (respectively link) coloring, where $h$ is an integer strictly positive. We introduce the following definitions:

\begin{definition} 
A $h$-hop node coloring is said \textbf{valid} if and only if any two nodes that are $k$-hop neighbors, with $1 \leq k\leq h$ have not the same color.\\
\end{definition}

\begin{definition} 
A $h$-hop link coloring is said \textbf{valid} if and only if any two links that are incident to the same vertex or $k$-hop neighboring vertices, with $1 \leq k\leq h-1$ have not the same color.\\
\end{definition}

\begin{definition} 
A valid $h$-hop node (respectively link) coloring is said \textbf{optimal} if and only if no valid $h$-hop node (respectively link) coloring uses less colors that this coloring.\\
\end{definition}

For simplicity reasons, colors are represented by natural integers, starting with zero.
We can notice that almost all coloring algorithms when applied to a wireless network, make the assumption of an ideal wireless environment.

\begin{definition} 
A wireless environment is said \textbf{ideal} if and only if:
\begin{itemize}
\item Any node has a unique address in the wireless network.
\item Links are symmetric: if node $u$ sees node $v$ as a one-hop neighbor, then conversely node $v$ sees node $u$ as a one-hop neighbor.
\item Links are stable. More precisely, link creation during or after the completion of the coloring algorithm is not taken into account.
\item For any node $u$, any node $w$ that is not in transmission range of $u$ cannot prevent $u$ from correctly receiving a message sent by $v$ of of its one-hop neighbors.\\
\end{itemize}
\end{definition}

It can be easily shown that the breakage of a link does not compromize the validity of a coloring, whereas the creation of a new link can make a coloring no longer valid. The creation of new links can result from node mobility or late arrival of node.\\

To compare the performance of coloring algorithms, two criteria are used:
\begin{itemize}
\item \textbf{the number of colors used}. The optimal number is called chromatic number.
\item \textbf{the number of rounds needed} to color the nodes/links of the graph. By definition, in a round, a node is able to send a message to its neighbors, to receive their message and to process them.
\end{itemize}
The overhead induced by the algorithm is evaluated mainly in terms of:
\begin{itemize}
\item bandwidth: number of messages sent, size of the messages;
\item memory: size of the data structure maintained.\\
\end{itemize}

We now present a brief state of the art dealing with graph coloring and its application to radio networks and wireless sensor networks. 

\section{State of the art\label{StateArt}}
As it can be guessed from Section~\ref{Problem}, coloring has been first introduced in graphs with vertex coloring and edge coloring. 

One-hop vertex coloring has been shown NP-complete in \cite{gar79} for the general case, whereas graphs with maximum vertex degree less than four and bipartite graphs can be colored in polynomial time. The first algorithms proposed were centralized like the greedy algorithm Dsatur (no color backtracking)~\cite{brel79}, where the vertex with the highest number of already colored neighbor vertices is colored first. Later on, 
decentralized ones like Distributed Largest First \cite{hans04} were designed. In this algorithm, each node selects a color. It is allowed to keep its color only if it has the largest degree among its neigbors but also this color does not conflict with the colors already chosen by its neighbors. This algorithm runs in O($\Delta^2 log n$), where $\Delta$ is the largest vertex degree and $n$ the number of vertices. The algorithm given in \cite{kuhn06} proceeds iteratively by reducing the number of colors step-by-step; initially, all nodes have distinct colors. This algorithm runs in O($\Delta^2 log n$) and uses a number of colors close to $\Delta$.\\

Edge coloring problems can be transformed into a vertex version: an edge coloring of a graph is just a vertex coloring of its link graph. Applied to wireless networks, edge coloring has been called link scheduling. For instance,~\cite{Gandham08} obtains a TDMA MAC schedule enabling two-way communication between every pair of neighbors. Edges are colored in such a way that two edges incident on the same node have not the same color. A feasible direction of transmission such that no destination is made unable to receive its message, is searched.\\ 
 
Coloring has then been applied to radio networks to provide a collision-free medium access. The goal is to schedule transmissions in time slots, in such a way that two senders allowed to transmit in the same slot do not interfer. Such a problem is also called broadcast scheduling in~\cite{Rama89}, channel assignment in ~\cite{Krumke00} or slot assignment in ~\cite{interfcolor}. TRAMA,~\cite{rajendran03}, schedules node transmissions by assigning time slots to transmitting nodes and receivers. Only nodes having data to send contend for a slot. The node with the highest priority in its neighborhood up to 2-hop wins the right to transmit in the slot considered. Each node declares in advance its next schedule containing the list of its slots and for each slot its receiver(s). The adaptivity of TRAMA to the traffic rate comes at a price: its complexity. DRAND,~\cite{DRAND}, the coloring algorithm used with the hybrid ZMAC protocol, \cite{ZMAC}, that operates like CSMA under low contention and like TDMA otherwise, assigns slots to nodes in such a way that 1-hop and 2-hop neighbors have different slots. This randomized algorithm has the advantage of not depending on the number of nodes but at the cost of an asymptotic convergence.\\

More recently, coloring algorithms have been designed for WSNs, like FLAMA~\cite{rajendran05}, an adaptation of TRAMA, where the overhead of the algorithm has been considerably reduced. This is obtained by supporting communications of a node only with its parent and its children in the data gathering tree rooted at the sink. TDMA-ASAP \cite{TDMA-ASAP} integrates a coloring algorithm with the medium access. It has been designed for data gathering applications where communications are limited to the data gathering tree. Moreover, this protocol can adapt to various traffic conditions by allowing a node to steal an unused slot to its brother in the tree. In WSNs where energy matters, it is important to use energy efficiently by assigning sensors
with consecutive time slots to reduce the frequency of state
transitions, like~\cite{Ma09}. FlexiTP~\cite{FlexiTP} is a TDMA-based protocol in
which a slot is assigned to one transmitter and one receiver. All other nodes can sleep during this slot.
Slots are assigned such that no nodes that are 1 or 2 hops away transmit in the same slot.
In this protocol, nodes build a tree rooted at the data aggregation sink and run a neighbor discovery phase.
The slot assignment order is given by a deep-first search of the tree.
 A node selects the smallest available slot in its neighborhood up to 2 hops and advertises its schedule.
 
Note that a node does not aggregate data from its children before sending them to its parent.
 Which means that the transitions between idle, transmit, receive activities are
frequent and increase with the network density. This may impact the data gathering delays and the energy consumed by a node.
 This solution does not support immediate acknowledgment.
In \cite{iwcmc08}, we proposed SERENA a node coloring algorithm that increases energy efficiency by avoiding collisions, reducing overhearing, allowing nodes to sleep to save energy and enabling spatial reuse of the bandwidth. This algorithm can support various types of communication (unicast with immediate acknowledgement, broadcast). It can also be optimized for data gathering applications: by scheduling the children before their parent, each parent can aggregate the data gathered from its children before transmitting them to its own parent. In this paper, we show how SERENA overhead can be reduced in dense WSNs. We show new performance results and establish new theoretical results.\\ 

The theoretical performance of coloring algorithms has been studied in the
litterature, in general for 2-hop coloring.
Because minimum 2-hop coloring is NP-hard, the focus has been on evaluating
the performance of \emph{approximation} algorithms, whose objective
is to not find the optimal coloring, but at least to be reasonnably close.
A typical approximation algorithm for coloring is \emph{FirstFit}:
\emph{FirstFit} \cite{Car07} sequentially assigns colors to nodes; it chooses for each
node the first available color.
Depending of the order in which the nodes are colored,
different results (with varied performance) are obtained.
Notice that SERENA is a practical distributed protocol that implements 
an efficient  version of the algorithm \emph{FirstFit} for a $3$-hop coloring, 
with a specific order induced by node priority.
A whole class of results expresses properties related to 
the \emph{worst-case} performance of approximation algorithms: 
they typically prove 
that,  for any input graph of a given family, the coloring obtained by a given 
algorithm  uses at most $\alpha$ times the optimal number of colors. 
Such an algorithm is denoted an $\alpha$-approximation algorithm.\\

Genetic coloring algorithms exist also, like \cite{Chakra04}. 
More generally, we can classify the coloring algorithms according to five criteria:
\begin{itemize}
\item centralized/distributed,
\item deterministic/probabilistic,
\item vertex/edge coloring,
\item types of communication supported,
\item optimized for WSN or not.
\end{itemize}
\begin{table}[!h]
\caption{Classification of coloring algorithms}
\label{AlgoTable}
\begin{tabular}{|c|c|c|c|c|c|}
\hline
& central.& determinist. & vertex & communication  &optimized\\
& distrib. & probabilist. & edge &  &\\
\hline
TRAMA & distrib. & determin. & edge & unicast&\\
\hline
FLAMA & distrib. & determin. & edge & unicast in a tree & data gathering\\
\hline
ZMAC-DRAND & distrib. & random.& vertex& unicast+broadcast& \\
\hline
TDMA-ASAP & central.*& determin. & vertex & unicast in a tree& data gathering\\
\hline
FlexiTP & distrib. & determin. & edge & unicast+broadcast& data gathering\\
\hline
SERENA & distrib. & determin. & vertex & unicast*+broadcast& here*\\
\hline
\end{tabular}

\begin{tabular}{l l l}
\textit{Legend:} &unicast*: &unicast with immediate acknowledgement:\\
&here*:& optimized in this paper for dense WSNs.\\
&central*:& only the centralized version is described in \cite{TDMA-ASAP}.\\
\end{tabular}
\end{table}

\newpage
\section{Complexity study \label{Complexity}}
In this section, we will demonstrate that $h$-hop ($h\ge1$) vertex coloring is a NP-complete problem. This assertation is given by Theorem~\ref{Thcomplexity}: \\

\begin{theorem}
\label{Thcomplexity}
The decision problem of h-hop (h$\ge$1) vertex coloring is NP-complete.
\end{theorem}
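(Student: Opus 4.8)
The plan is to prove NP-completeness in two parts: membership in NP, and NP-hardness via reduction from a known NP-complete problem.

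\medskip

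\noindent\textbf{Membership in NP.} First I would exhibit a polynomial-time verifier. Given a graph $G(V,E)$, an integer $k$, and a candidate color assignment $c: V \to \{0,\dots,k-1\}$, one can check validity by computing, for each pair of vertices, whether their distance in $G$ is at most $h$ (a bounded-depth BFS from each vertex suffices, which is polynomial since $h$ is fixed or polynomially bounded), and verifying that no two such vertices share a color. This runs in polynomial time, so the problem is in NP.

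\medskip

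\noindent\textbf{NP-hardness.} The natural approach is a reduction from ordinary ($1$-hop) graph coloring, which is stated in the excerpt (citing \cite{gar79}) to be NP-complete. Given an instance $(G,k)$ of $1$-hop coloring, I would construct a graph $G'$ such that the $h$-hop coloring of $G'$ is ``equivalent'' to the $1$-hop coloring of $G$. The idea is to subdivide each edge of $G$: replace every edge $(u,v)$ by a path of length $h$ (inserting $h-1$ new intermediate vertices). In the resulting graph $G'$, two original vertices $u,v$ that were adjacent in $G$ are now at distance exactly $h$, hence still constrained to receive different colors in an $h$-hop coloring; two original vertices nonadjacent in $G$ are at distance $\geq h+1$ (or need a slightly more careful gadget to guarantee this). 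One must also ensure the newly added path vertices do not force extra colors beyond what is needed — this is handled because the original chromatic number is typically at least a small constant, and the subdivision vertices can reuse colors; alternatively one pads $k$ by a fixed additive constant and shows the equivalence still holds. The reduction is clearly polynomial in the size of $G$ when $h$ is fixed.

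\medskip

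\noindent\textbf{Main obstacle.} The delicate point is controlling distances in $G'$ so that the equivalence is exact: subdivision makes adjacent vertices distance-$h$ apart, but one must verify that no spurious distance-$\le h$ relations are created between an original vertex and a subdivision vertex, or between two subdivision vertices, that would inflate the number of required colors in a way not matched on the $G$ side. Getting the bookkeeping right — choosing whether to subdivide into paths of length exactly $h$ or length $h+1$, and whether to add a fixed number of extra ``helper'' colors — is where the care is needed. A clean alternative, which I would also consider, is a reduction that works uniformly for all $h$ by handling $h=1$ directly (it is literally $1$-hop coloring) and then giving the subdivision gadget for $h\ge 2$; one then argues $G$ is $k$-colorable if and only if $G'$ is $(k+\text{const})$-colorable in the $h$-hop sense.
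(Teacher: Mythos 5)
Your NP-membership argument is fine and matches the paper's. The hardness part, however, has a genuine gap, and it is exactly at the point you flag as the ``main obstacle'' but then wave away. Replacing each edge of $G$ by a path of length $h$ does make adjacent original vertices distance-$h$ apart and non-adjacent ones distance $\ge 2h$ apart, so the forward constraint on original vertices is fine. What is not handled is the exact accounting of colors, and in particular the backward direction of the equivalence. In a plain subdivision nothing partitions the color set between original vertices and subdivision vertices: an $h$-hop coloring of $G'$ with $k'$ colors may spread its colors arbitrarily over $V$, so from it you can only extract a proper coloring of $G$ with up to $k'$ colors, not with $k$. Moreover your claim that one can ``pad $k$ by a fixed additive constant'' is false: for $h\ge 2$ the first-layer subdivision vertices around a vertex of degree $d$ are pairwise at distance $2$, hence need $d$ distinct colors, so the number of extra colors required by the new vertices grows with the maximum degree of $G$ (and interacts with the reuse of original colors), and appeals to what ``typically'' happens have no force in an NP-hardness proof. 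As stated, the reduction does not yield ``$G$ is $k$-colorable iff $G'$ is $k'$-colorable'' for any $k'$ you can simply write down.

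The paper closes precisely this hole with an extra structural requirement (its Constraint C3): every added vertex must be within $h$ hops of every other added vertex and of every original vertex. This is achieved not by per-edge subdivision but by attaching to each original vertex a pendant path of about $h/2$ new vertices, connecting the last layer according to the adjacency of $G$ (via an extra hub vertex $u_0$ when $h$ is odd, or via edge-vertices forming a clique when $h$ is even). Consequently all $m$ new vertices are pairwise at most $(h-1)$ hops apart and at most $h$ hops from any original vertex, so they are forced to use $m$ fresh, mutually distinct colors that cannot appear on $V$; this pins down $k'=k+m$ exactly and makes both directions of the equivalence immediate. To repair your proof you would need an analogous gadget (e.g.\ a hub or clique tying all subdivision vertices together within distance $h-1$ and within distance $h$ of all of $V$, while checking this does not shorten any distance between original vertices below $h+1$ when they are non-adjacent); without such a forcing mechanism the reduction is incomplete.
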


It has been proved in~\cite{gar79} that the 1-hop vertex coloring problem is NP-complete. We now prove the NP-completeness for $h \geq2$. Our methodology to prove Theorem~\ref{Thcomplexity} is based on the following steps:\\

\noindent$\bullet$ First, we define the associated decision problem of the $h$-hop vertex coloring of a graph $G$ which is: can this graph $G$ be colored with $k$ colors ($k$ is a positive integer smaller than the vertex number), such that two nodes that are $l$-hop neighbors with $1 \le l \le h$ have not the same color? This problem is called \textbf{$k$-color $h$-hop coloring}.\\

\noindent$\bullet$ Second, we prove the following lemma:
\begin{lemma}
The $k$-color $h$-hop coloring problem is in NP, for $h \ge 2$.
\end{lemma}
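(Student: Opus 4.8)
The plan is to establish membership in NP in the standard way: exhibit a certificate of polynomial size together with a deterministic verification procedure that runs in polynomial time, accepts the certificate exactly when the instance is a yes-instance, and otherwise rejects.

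First, I would take as certificate a color assignment, i.e. a function $c : V \to \{0,1,\ldots,k-1\}$ giving to each vertex $v_i$ a candidate color $c(v_i)$. Since $k$ is bounded by the number of vertices $n$, this certificate is encoded in $O(n\log n)$ bits, hence its size is polynomial in the size of the input $\langle G,k,h\rangle$. We may also assume $h\le n$, since for larger $h$ the problem degenerates (every pair of vertices in a connected component must then receive a distinct color) and is trivially decidable.

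Second, I would describe the verifier. It must check that $c$ is a valid $h$-hop coloring, that is, that no two vertices which are $l$-hop neighbors for some $1\le l\le h$ receive the same color. The crucial point is that the relation ``$v_i$ and $v_j$ are $l$-hop neighbors for some $1\le l\le h$'', equivalently ``the distance between $v_i$ and $v_j$ in $G$ is at most $h$'', is computable in polynomial time: one runs from each vertex a breadth-first search truncated at depth $h$ (cost $O(|V|+|E|)$ per source, so $O(|V|(|V|+|E|))$ in total), or equivalently one computes the Boolean matrix $\bigvee_{l=1}^{h} A^l$ where $A$ is the adjacency matrix of $G$. Having identified all such pairs, the verifier scans them and rejects as soon as it finds a pair with $c(v_i)=c(v_j)$; if no such pair exists it accepts. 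This whole procedure is polynomial in $n$ and $h$.

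Finally, I would check correctness: by construction the verifier accepts $(\,\langle G,k,h\rangle, c\,)$ if and only if $c$ uses colors from $\{0,\ldots,k-1\}$ and assigns different colors to every pair of vertices at distance at most $h$, i.e. if and only if $c$ witnesses that $G$ admits a $k$-color $h$-hop coloring. Hence $\langle G,k,h\rangle$ is a yes-instance precisely when some accepted certificate exists, which is exactly the definition of membership in NP. I do not expect a genuine obstacle in this lemma; the only step that deserves explicit care is the claim that the $h$-hop neighborhood relation can be evaluated in polynomial time, which the truncated-BFS (or matrix-power) argument settles.
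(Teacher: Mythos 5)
Your proposal is correct and follows essentially the same route as the paper: take the coloring itself as the certificate and verify in polynomial time that no two vertices within distance $h$ share a color. The only difference is cosmetic: you verify via truncated BFS (or powers of the adjacency matrix), giving a bound polynomial in $n$ independent of $h$, whereas the paper simply quotes an $O(n^h)$ check, which is polynomial since $h$ is a fixed parameter of the problem.
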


\proof Given a $h$-hop coloring of $G$, $h \ge 2$ we can check in polynomial time
($O(n^h)$, where $n$ is the number of nodes) that the coloring
produced by a given $h$-hop algorithm does not assign the same
color to two nodes that are $p$-hop neighbors with $1 \le p \le h$, and that the
total number of colors is $k$. 
\endproof

\vspace*{10pt}
\noindent$\bullet$ Third, we define a reduction $f$ of the $k$-color 1-hop vertex coloring problem that has been shown NP-complete in~\cite{gar79}, to a $k'$-color $h$-hop coloring problem, with $k'$ a positive integer smaller than the nodes number. This reduction should be polynomial in time. Based on this reduction, we then prove the following equivalence: 
\begin{equivalence}
\label{equiv}
A $k'$-color $h$-hop vertex coloring problem has a solution if and only if a $k$-color 1-hop vertex coloring problem has a solution.
\end{equivalence}

In general, to demonstrate that a problem is NP-complete based on another problem that is known to be NP-complete, the required reduction should allow us to show that we can find a solution for the first problem if and only if we can find a solution for the second problem. 
In our case, we should transform a graph $G(V,E)$ to a graph $G'=(V',E')$, and show that finding a $k$-color 1-hop coloring of $G(V,E)$ can lead to find a $k'$-color $h$-hop coloring of $G'(V',E')$ and vice versa, proving Equivalence~\ref{equiv}.\\
Finding a valid $k'$-color $h$-hop coloring of $G'(V',E')$ based on a valid $k$-color 1-hop coloring of $G(V,E)$ requires that for any two nodes $v_1$ and $v_2$ in $G$, the following constraints are met: \\
\begin{constraint}
\label{C1} 
Any two nodes $v_1$ and $v_2$, 1-hop away in $G$ must be at most $h$-hop away in $G'$. 
\end{constraint}
Thus, these two nodes that are assigned different colors by a 1-hop coloring of $G$ are also assigned different colors by a $h$-hop coloring of $G'$.
\begin{constraint}
\label{C2} 
Similarly, any two nodes $v_1$ and $v_2$, 2-hop away in $G$ must be at least $h+1$-hop away in $G'$. 
\end{constraint}

Consequently, the reduction separates any two nodes $v_i$ and $v_j$ of the initial graph $G$ by a set of nodes such that the distance between them in the new graph $G'$ is at most $h$ hops. $V'$ is obtained from $V$ by adding new nodes. The definition of these new nodes depends on $h$ parity. An example is depicted in Figure~\ref{5hopGraph} for $h$=5. 

In order to simplify the determination of $k'$, the number of colors used for the $h$-hop coloring of $G'$, we add to the transformation a new constraint:\\
\begin{constraint}
\label{C3} 
Any two nodes in $V'\setminus V$ must be at most $h$-hop away. Moreover, any two nodes $u \in V$ and $v \in V' \setminus V$ must be at most $h$-hop away.
\end{constraint}
Thus, a $h$-hop coloring of $G'$ cannot reuse a color in $V'\setminus V$. Similarly, no node in $V$ can reuse a color used by a node in $V' \setminus V$.\\

The transformation proceeds as follows, depending on the parity of $h$:
\begin{enumerate}
\item  \textbf{First case: $h$ is odd}: see the example $h=5$ illustrated in Figure~\ref{5hopGraph}. \\
$\bullet$ \textit{Definition of $V'$}\\
In this case, we first define $h'=(h-1)/2$ bijective functions $f_i$ with $i \in[1,h']$:\\
		$\begin{array}{ccccc}
		f_i & : & V & \to & U_i\\ 
		& & v & \mapsto & f_i(v) = u_i\\
		\end{array}$\\

Now, we can define the set $V'$, $V' = V \cup_i U_i\ \cup \{u_0\}$, $\forall i \in [1,h']$, where $u_0$ is a new node introduced to meet constraint C3. Node $u_0$ is a neighbor of all nodes in $U_{h'}$.\\

$\bullet$ \textit{Definition of $E'$}\\
To build the set $E'$, four types of links are introduced. We then have:  $E'= E_1 \cup E_2 \cup E_3 \cup E_4$ where:
\begin{itemize}
\item $E_1 = \{(v,u_1)\ such\ that\ v \in V \ and\ u_1=f_1(v) \in U_1\}$. Thus, each node $v_i$ from the initial graph $G$ is linked to $u_{i1}$, its associated node from the set $U_1$ (see links of type $e_1$ in Figure~\ref{5hopGraph}).
\item $E_2 = \cup_{l \in[1,h'-1]} \{(u_l,u_{l+1})\ such\ that\ u_l \in U_l\ and\ u_{l+1} \in U_{l+1}\ and\ f_l^{-1}(u_l)=f_{l+1}^{-1}(u_{l+1})\}$. Each node $u_{ij}$ from $U_j$ is linked to node $u_{ij+1}$ from $U_{j+1}$ associated with the same node $v \in V$, (see links of type $e_2$). 
\item $E_3 = \{(u_{h'},v_{h'})\ such\ that\ u_{h'}\ and\ v_{h'}\in U_{h'}\ and\ (f_{h'}^{-1}(u_{h'}), f_{h'}^{-1}(v_{h'})) \in E\}$. Two nodes $u_{ih'}$ and $v_{ih'}$ from $U_{h'}$ are linked to each other if their corresponding nodes in $V$ are linked in $E$ (see links of type $e_3$).
\item $E_4 = \{(u,u_0)\ with\ u \in U_{h'}\}$. Finally, the nodes in $U_{h'}$ are linked to the conjunction node $u_0$, which was added to respect the constraint \textbf{C3} (see links of type $e_4$).\\
\end{itemize}

	This construction is polynomial in time. An example of graphs $G$
	and $G'$ with $h=5$ is illustrated in Figure~\ref{5hopGraph}.
	\begin{figure}[!h]
	\begin{center}
	\subfigure[]{\includegraphics[width=0.5\linewidth]{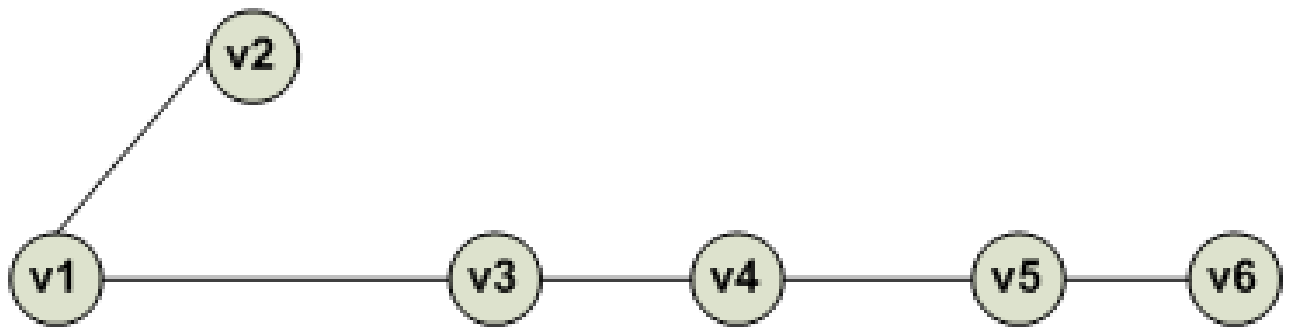}\label{originalGraph}}
	\subfigure[]{\includegraphics[width=0.9\linewidth]{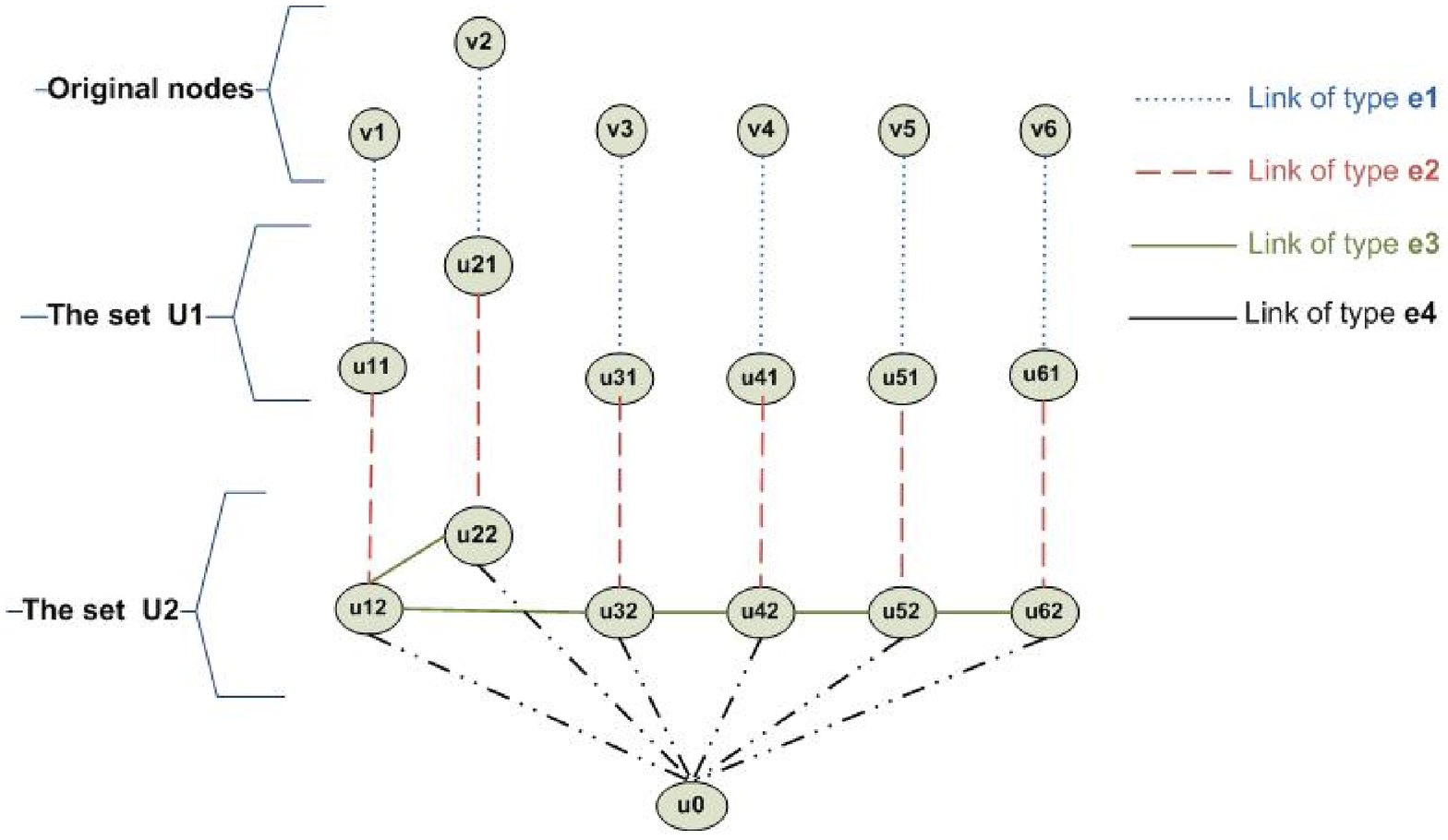}\label{5hop}}
	\caption{Example of: (a) Graph $G$; (b) Transformed graph $G'$ for $h=5$.} \label{5hopGraph} \vspace{-0.3cm}
	\end{center}
	\vspace{-0.2cm}
	\end{figure}

\item \textbf{Second case: $h$ is even}: see the example $h=6$ illustrated in Figure~\ref{6hopGraph}.\\ 

To build the graph $G'$ in the case $h$ is an even number, the same contraints C\ref{C1}, C\ref{C2} and C\ref{ C3} are considered. However, as the number of links to introduce between two nodes in the initial graph $G$ depends on the number of nodes to introduce between them, and thus, on the $h$ parity, we outline some differences in the reduction.

$\bullet$ \textit{Definition of $V'$}\\
In this case, let $h'=h/2$, we first define $h'-1$ bijective functions $f_i$ with $i \in[1,h'-1]$:\\
		$\begin{array}{ccccc}
		f_i & : & V & \to & U_i\\ 
		& & v & \mapsto & f_i(v) = u_i\\
		\end{array}$\\
and the bijective function $f_{h'}$:\\
		$\begin{array}{ccccc}
		f_{h'} & : & E & \to & U_{h'}\\ 
		& & e & \mapsto & f_{h'}(e)= u_{h'}\\
		\end{array}$\\
	
Now, we can define the set $V'$, $V' = V \cup_i U_i$, $\forall i \in [1,h']$.\\

$\bullet$ \textit{Definition of $E'$}\\
To build the set $E'$, five types of links are introduced. We then have: 
$E'= E_1 \cup E_2 \cup E_3 \cup E_4 \cup E_5$ where:\\
\begin{itemize}
\item $E_1 = \{(v,u_1)\ such\ that\ v \in V \ and\ u_1=f_1(v) \in U_1\}$. Thus, each node $v_i$ from the initial graph $G$ is linked to $u_{i1}$, its associated node from the set $U_1$ (see links of type $e_1$ in Figure~\ref{6hopGraph}).
\item $E_2 = \cup_{l \in[1,h'-2]} \{(u_l,u_{l+1})\ such\ that\ u_l \in U_l\ and\ u_{l+1} \in U_{l+1}\ and\ f_l^{-1}(u_l)=f_{l+1}^{-1}(u_{l+1})\}$. Each node $u_{ij}$ from $U_j$ is linked to node $u_{ij+1}$ from $U_{j+1}$ associated with the same node $v \in V$, (see links of type $e_2$). 
\item $E_3 = \{(u_{h'},v_{h'})\ such\ that\ u_{h'}\ and\ v_{h'}\in U_{h'}\ and\ (f_{h'}^{-1}(u_{h'}), f_{h'}^{-1}(v_{h'})) \in E\}$. Two nodes $u_{ih'}$ and $v_{ih'}$ from $U_{h'}$ are linked to each other if their corresponding nodes in $V$ are linked in $E$ (see links of type $e_3$).
\item $E_4 = \{(u_{h'-1},u_{h'}), (u_{h'},v_{h'-1})\ such\ that\ u_{h'-1}\ and\ v_{h'-1} \in U_{h'-1} \ and\ u_{h'} \in U_{h'}\ with\ f_{h'}^{-1}(u_{h'})=(f_{h'-1}^{-1}(u_{h'-1}, f_{h'-1}^{-1}(v_{h'-1}))\}$. In other words, for each couple of nodes $u_{h'-1}$ and $v_{h'-1}$ in $U_{h'-1}$, we associate a node $u_{h'} \in U_{h'-1}$ if and only if $(f_{h'-1}^{-1}(u_{h'-1}), f_{h'-1}^{-1}(v_{h'-1})) \in E$. We then link $u_{h'}$ with $u_{h'-1}$ and $v_{h'-1}$ (see links of type $e_4$).
\item $E_5 = \{(u_i,u_j)\ such\ that\ u_i\ and\ u_j \in U_{h'}\ and \ i \ne j\}$. This means that the nodes in $U_{h'}$ form a complete graph (see links of type $e_5$).
\end{itemize}

This construction is polynomial in time. An example of graphs $G$
and $G'$ with $h=6$ is illustrated in Figure~\ref{6hopGraph}.
\newpage
\begin{figure}[!h]
	\begin{center}
  \subfigure[]{\includegraphics[width=0.5\linewidth]{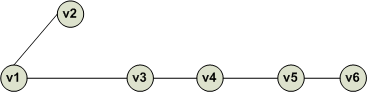}\label{originalGraph}}
  \subfigure[]{\includegraphics[width=0.9\linewidth]{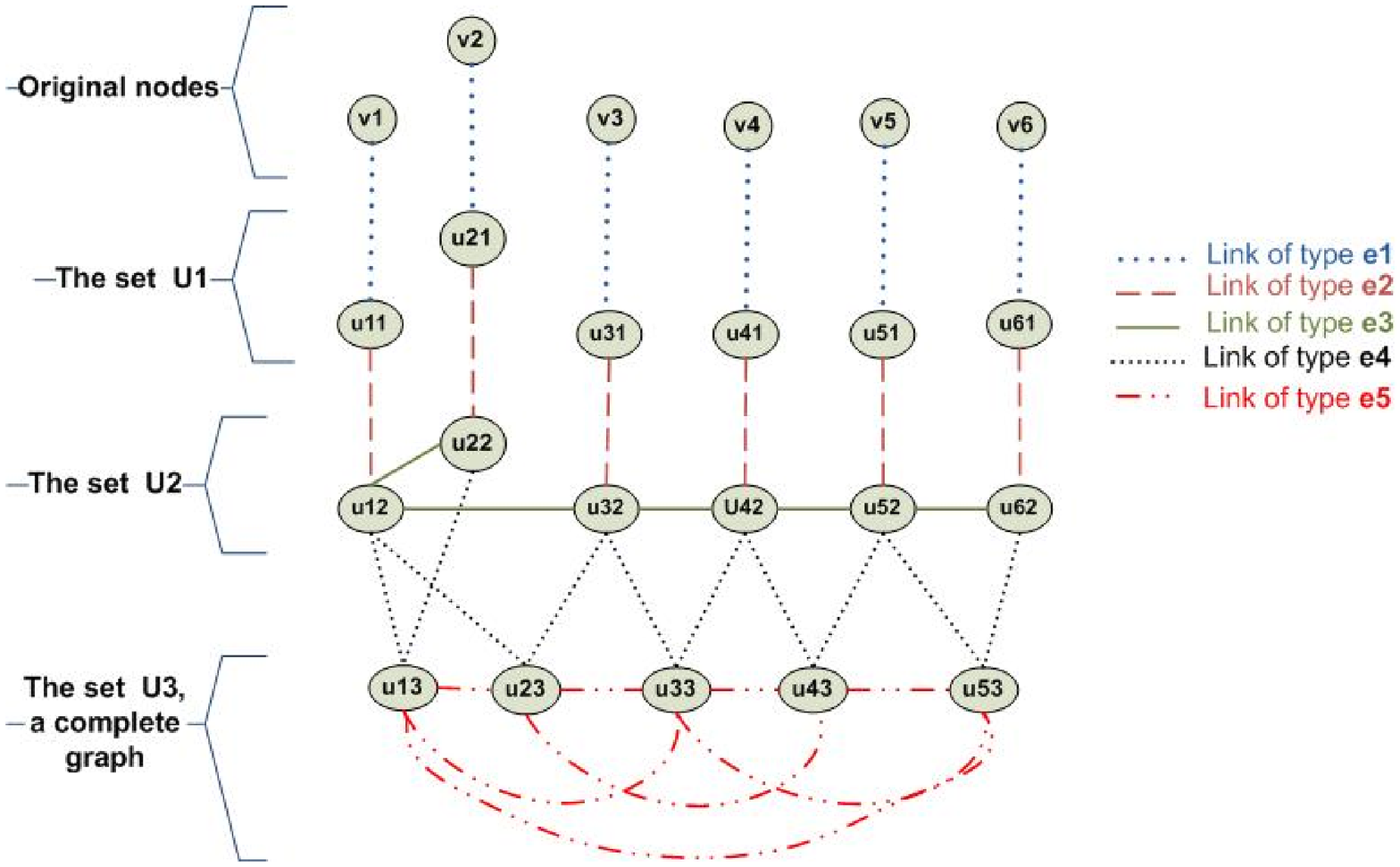}\label{6hop}}
	\caption{Example of: (a) Graph $G$; (b) Transformed graph $G'$ for $h=6$.} \label{6hopGraph} \vspace{-0.3cm}
	\end{center}
	\vspace{-0.2cm}
	\end{figure}
\end{enumerate}

We now show, that the $k'$-color $h$-hop vertex coloring problem, for $h \ge 2$ has a solution if and only if the $k$-color 1-hop vertex coloring problem has a solution. We define the following Lemma:

\begin{lemma}
\label{lemmah-1}
All nodes in $G'\setminus G$ are at most $(h-1)$-hop neighbors.
\end{lemma}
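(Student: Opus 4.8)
The plan is to prove Lemma~\ref{lemmah-1} by a direct case analysis on the parity of $h$, computing, for each pair of node-classes in the construction, an upper bound on the distance between their members in $G'$, and checking that the maximum of these bounds is $h-1$. The key observation driving every estimate is that the newly added nodes are organized in ``layers'' $U_1, U_2, \dots, U_{h'}$ (plus the conjunction node $u_0$ in the odd case), and the edge sets $E_1, \dots$ force a very rigid structure: within a layer one can jump via the $E_3$/$E_4$/$E_5$ edges, and between consecutive layers one moves via $E_2$ edges.

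First I would treat the odd case, $h' = (h-1)/2$. The ``farthest apart'' newly added nodes are two nodes in the same layer $U_1$ whose underlying vertices in $V$ are not adjacent in $G$; to connect $u_1 = f_1(v)$ and $u_1' = f_1(v')$ one must travel $U_1 \to U_2 \to \cdots \to U_{h'}$, cross over (either through $u_0$ via two $E_4$ edges, or directly if the underlying vertices happen to be adjacent), and come back down $U_{h'} \to \cdots \to U_1$. That is at most $(h'-1) + 2 + (h'-1) = 2h' = h-1$ hops, using $u_0$ as the pivot. Any other pair of added nodes (nodes in deeper layers, or one node plus $u_0$) is at distance at most $h-1$ by the same or a shorter route, since moving deeper only shortens the up-and-down portion. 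I would note that the single ``$+2$'' through $u_0$ is exactly what constraint~C3 buys us, and it is what makes the bound $h-1$ rather than something larger.

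For the even case, $h' = h/2$, the last layer $U_{h'}$ is a clique (edge set $E_5$) and its nodes correspond to \emph{edges} of $G$, with $E_4$ attaching each such node to the two $U_{h'-1}$-nodes of its endpoints. Here two nodes in $U_1$ are connected by going $U_1 \to \cdots \to U_{h'-1}$ (that is $h'-2$ hops), then one hop into $U_{h'}$, then one hop within the clique $U_{h'}$ (or zero if it is the same $U_{h'}$ node), then one hop back to $U_{h'-1}$, then $U_{h'-1} \to \cdots \to U_1$ (another $h'-2$ hops): total at most $(h'-2) + 1 + 1 + 1 + (h'-2) = 2h' - 1 = h-1$. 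Again all other pairs of added nodes give a smaller distance. The clique on $U_{h'}$ plays in the even case the role that $u_0$ plays in the odd case, ensuring C3 and pinning the diameter at $h-1$.

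The main obstacle, and the part that needs care rather than cleverness, is the bookkeeping: one must be sure to enumerate \emph{all} pairs of classes ($U_i$ vs.\ $U_j$ for all $i \le j$, and each $U_i$ vs.\ $\{u_0\}$ in the odd case) and, for each, exhibit the shortest connecting walk and verify it does not exceed $h-1$ --- in particular checking that the worst case really is a pair in the outermost layer $U_1$ with non-adjacent underlying vertices, and that no shortcut or detour I have overlooked makes some other pair longer. I would also double-check the two smallest cases by hand ($h=2,3$ directly, $h=4,5$ against Figures~\ref{6hopGraph} and~\ref{5hopGraph}) to guard against an off-by-one in the layer count. Once Lemma~\ref{lemmah-1} is in place it immediately gives, together with the $u_0$/clique construction, that $G' \setminus G$ induces a set of nodes pairwise within $h$ hops, which is what is needed to control $k'$ and to establish Constraints~C1--C3 and hence Equivalence~\ref{equiv}.
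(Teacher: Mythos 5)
Your proof is correct and takes the same route as the paper, which disposes of this lemma with the single line ``By construction of $G'$'': you simply spell that construction out, identifying the worst-case pair (two $U_1$-nodes with non-adjacent underlying vertices) and the pivot ($u_0$ for odd $h$, the clique $U_{h'}$ for even $h$) that caps the distance at $2h'=h-1$ in the odd case and $2h'-1=h-1$ in the even case. The only caveat, which the paper glosses over as well, is the degenerate situation of isolated vertices of $G$ in the even construction, whose layer chains have no $E_4$ edge into $U_{h'}$; apart from that, your bookkeeping is exactly what the paper's terse proof implicitly relies on.
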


\vspace{-6pt}
\proof By construction of $G'$.
\endproof

\vspace{+12pt}
\begin{lemma}
\label{lemmaG'} 
To perform a $h$-hop coloring of the graph $G'$, the number of colors taken by nodes in $V'\setminus V$ is equal to $m$ with $m$ is equal to $(h'\times n)+1$ if $h$ is an odd number, and $(h'\times n)-1$ if $h$ is an even number, where $n$ is the number of nodes in $G$.
\end{lemma}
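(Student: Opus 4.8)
The plan is to split the statement into two essentially independent parts: (i) show that, in every valid $h$-hop coloring of $G'$, no two nodes of $V'\setminus V$ can share a color, so that the number of colors appearing on $V'\setminus V$ is exactly $|V'\setminus V|$; and (ii) compute $|V'\setminus V|$ and check that it equals $m$.

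For part (i) I would combine Lemma~\ref{lemmah-1} with the definition of a valid $h$-hop coloring. Let $x,y$ be two distinct nodes of $V'\setminus V$ (as vertex sets, $V'\setminus V = G'\setminus G$). By Lemma~\ref{lemmah-1} they are $k$-hop neighbours in $G'$ for some $k$ with $1\le k\le h-1$, hence with $1\le k\le h$, so by the validity definition they must receive different colors. Thus the coloring restricted to $V'\setminus V$ is injective, so the number of colors it uses is exactly $|V'\setminus V|$: at most that many since $V'\setminus V$ has that many nodes, and at least that many by injectivity. It then only remains to evaluate $|V'\setminus V|$.

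For part (ii) I would enumerate the extra vertices layer by layer, distinguishing the parity of $h$ as in the construction. When $h$ is odd, $h'=(h-1)/2$ and $V'\setminus V=\{u_0\}\cup\bigcup_{i=1}^{h'}U_i$; each $f_i:V\to U_i$ is a bijection, so $|U_i|=n$, and the hub node $u_0$ adds one more, giving $|V'\setminus V|=h'n+1=m$. When $h$ is even, $h'=h/2$ and $V'\setminus V=\bigcup_{i=1}^{h'}U_i$; the first $h'-1$ layers again have size $n$ through the bijections $f_i$, and adding the size of the last, edge-indexed layer $U_{h'}$ gives $|V'\setminus V|=h'n-1=m$. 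Either way the count matches the announced value of $m$.

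The injectivity step is routine; the delicate part is the bookkeeping in part (ii) — correctly accounting for the "$+1$" (the hub $u_0$) in the odd case and the "$-1$" in the even case, and checking that $V$ together with the listed layers exhausts $V'$ — and, upstream, Lemma~\ref{lemmah-1} on which everything rests. Proving that lemma requires a short case analysis on which layers $x$ and $y$ belong to and crucially uses the structures built in for exactly this purpose: the node $u_0$ adjacent to all of $U_{h'}$ when $h$ is odd, and the clique $E_5$ on $U_{h'}$ together with the $E_4$-links when $h$ is even (the latter tacitly assuming $G$ has no isolated vertex, so that every $U_i$-node can reach $U_{h'}$; isolated vertices, if any, are irrelevant to coloring and can be handled separately). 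I expect this distance analysis, already granted here by Lemma~\ref{lemmah-1}, to be the main obstacle, the remainder being elementary counting.
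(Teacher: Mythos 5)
Your overall strategy is exactly the paper's: its proof also invokes Lemma~\ref{lemmah-1} to conclude that no color can be reused among the nodes of $V'\setminus V$ (so the number of colors on $V'\setminus V$ is just $|V'\setminus V|$), and then disposes of the count with the words ``by construction of $G'$''. Your part (i) and your odd-case count ($h'$ layers of size $n$ plus the hub $u_0$, giving $h'n+1$) are correct and coincide with the paper.

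The genuine gap is in your even-case bookkeeping, and you have in fact made it visible: you correctly observe that the last layer $U_{h'}$ is indexed by the \emph{edges} of $G$ (since $f_{h'}:E\to U_{h'}$ is a bijection), so that $|V'\setminus V|=(h'-1)n+|E|$, but you then assert without justification that this equals $h'n-1$. That identity forces $|E|=n-1$, which a general instance $G$ of the 1-hop coloring problem need not satisfy (it holds only for graphs with exactly $n-1$ edges, e.g.\ trees); for, say, a complete graph the two quantities differ badly. So this step would fail as written — not because your strategy is wrong, but because the formula in the statement itself is off in the even case; the correct count is $m=(h'-1)n+|E|$. The paper's own proof hides the same problem behind ``by construction'', so you have faithfully reproduced its argument, and the fix is harmless for the NP-completeness reduction: all that is needed is that $m$ be determined by $G$ and $h$ and computable in polynomial time, so that one can set $k'=k+m$. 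Your parenthetical caveat about isolated vertices in the even-case proof of Lemma~\ref{lemmah-1} is also a real issue that the paper glosses over (an isolated vertex's chain never reaches $U_{h'}$), and handling it separately, as you suggest, is the right instinct.
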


\vspace{-6pt} \proof From Lemma~\ref{lemmah-1}, all nodes in $V'\setminus V$ are at most 
$(h-1)$-hop neighbors. Hence, no color can be reused with
$h$-hop coloring ($h \ge 2$) of $G'$. By construction of $G'$, the number of these
nodes is equal to $(h'\cdot n)+1$ if $h$ is an odd number, and $(h'\cdot n) -1$ if $h$ is an even number.
\endproof

\vspace{+12pt}
\begin{lemma}
\label{lemmaV'}
Any color used for a node in $V$ by a $h$-hop coloring of $G'=(V',E')$ cannot be used by any node in $V' \setminus V$.
\end{lemma}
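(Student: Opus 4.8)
The plan is to reduce the statement to a distance computation in $G'$ and then invoke the definition of a valid $h$-hop coloring. Concretely, I would show that for every $u\in V$ and every $w\in V'\setminus V$ one has $1\le \mathrm{dist}_{G'}(u,w)\le h$; since a valid $h$-hop coloring of $G'$ assigns distinct colors to any two nodes that are $k$-hop neighbours for some $1\le k\le h$, it then follows at once that $u$ and $w$ carry different colors, and hence that no color appearing on a node of $V$ can reappear on a node of $V'\setminus V$. In other words, Lemma~\ref{lemmaV'} is just the formal counterpart of the second half of Constraint~\ref{C3}, so the actual work is to verify that the explicit construction of $G'$ honours C\ref{C3}. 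I would do this by exhibiting, for each parity of $h$, an explicit short path in $G'$ between $u$ and $w$.

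For $h$ odd (so $h'=(h-1)/2$), take $u\in V$ and $w\in V'\setminus V$. If $w=u_0$, the chain $u,f_1(u),\dots,f_{h'}(u),u_0$, built from edges of types $e_1$, $e_2$, $e_4$, has length $h'+1\le h$. If $w=f_i(x)\in U_i$ with $x\ne u$, the chain $u,f_1(u),\dots,f_{h'}(u),u_0,f_{h'}(x),f_{h'-1}(x),\dots,f_i(x)$ has length $h'+1+1+(h'-i)=h+1-i\le h$; and if $x=u$ the shorter chain $u,f_1(u),\dots,f_i(u)$ of length $i\le h'\le h$ suffices. Thus $\mathrm{dist}_{G'}(u,w)\le h$ in every case.

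For $h$ even (so $h'=h/2$) the argument is the same, with the hub $u_0$ replaced by the clique on $U_{h'}$ (edges of type $e_5$): given $u\in V$ and $w=f_i(x)\in U_i$ with $x$ the node of $V$ that $w$ lies over, pick an edge $e$ of $G$ incident to $u$ and an edge $e'$ incident to $x$, and walk $u,f_1(u),\dots,f_{h'-1}(u),f_{h'}(e),f_{h'}(e'),f_{h'-1}(x),\dots,f_i(x)$ along edges of types $e_1,e_2,e_4,e_5,e_4,e_2$; its length is $h+1-i\le h$, while shorter walks cover the cases $x=u$ and $w\in U_{h'}$ (where the clique already gives length $\le h'+1\le h$). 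Combining the two parities gives $\mathrm{dist}_{G'}(u,w)\le h$ for all $u\in V$, $w\in V'\setminus V$, and the lemma follows.

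The difficulty here is combinatorial bookkeeping rather than anything conceptual: one must check the hop counts in all sub-cases and, above all, handle the degenerate situations that either make the bound tight or make the generic path fail to exist — a vertex of $G$ with no incident edge (for which, in the even case, $f_{h'-1}(v)$ has no neighbour in $U_{h'}$, so either $G$ must be assumed to have none or such vertices must be treated apart), and the smallest admissible values $h=2$ and $h=3$, where several of the sets $E_2$ are empty and the chains above degenerate. Once these cases are dispatched, Constraint~\ref{C3} holds as stated and Lemma~\ref{lemmaV'} is immediate.
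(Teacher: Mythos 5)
Your proof is correct and rests on the same underlying idea as the paper's: establish that every $u\in V$ and $w\in V'\setminus V$ are within $h$ hops in $G'$ (i.e.\ that Constraint~C3 really holds), then conclude from the validity of the $h$-hop coloring. The difference is in how the distance bound is obtained. The paper gets it in two lines: for $v\in V$ and $u\in V'\setminus V$ it writes $d(v,u)\le d(v,f_1(v))+d(f_1(v),u)\le 1+(h-1)=h$, leaning on Lemma~\ref{lemmah-1} (all nodes of $V'\setminus V$ are pairwise at most $(h-1)$-hop apart), which is itself justified only ``by construction of $G'$''. You instead rebuild the bound from scratch with explicit paths through the hub $u_0$ (odd $h$) or the clique on $U_{h'}$ (even $h$); your hop counts $h+1-i\le h$, $h'+1\le h$, etc.\ check out. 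What your longer route buys is precisely the content the paper's ``by construction'' hides: it makes visible that the even-$h$ construction needs an edge of $G$ incident to each vertex (an isolated vertex of $G$ leaves $f_{h'-1}(x)$ dangling, in which case both Lemma~\ref{lemmah-1} and this lemma fail as stated, so one must assume or enforce no isolated vertices --- harmless for the reduction), and that $h=2$ is degenerate in the construction itself. These flagged cases are defects of the construction rather than gaps in your argument, so I consider the proposal sound; the paper's version is shorter only because it delegates exactly this bookkeeping to the unproved Lemma~\ref{lemmah-1}.
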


\vspace{-6pt} \proof Let us consider any node $u \in  V' \setminus V$ and any node $v \in V$. 
Let $d(v,u)$ be the number of hops between $v$ and $u$. By construction, $d(v,u)=d(v, f_1(v)) + d(f_1(v),u)$. From Lemma~\ref{lemmah-1}, $d(f_1(v),u) \leq h-1$ and since $f_1(v)$ is a neighbor of $v$, we get $d(u,v) \leq h$.
Hence, $u$ and $v$ must use different colors with $h$-hop coloring of $G'$ for $h \ge 2$.
\endproof

\vspace{10pt}
\noindent To complete the proof of Theorem~\ref{Thcomplexity}, we now prove the following Lemma
\begin{lemma}
$G(V,E)$ has a one-hop
coloring with $k$ colors if and only if $G'(V',E')$ has a $h$-hop
coloring with $k'$ colors, with $h \ge 2$.
\end{lemma}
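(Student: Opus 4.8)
The plan is to establish the two directions of the equivalence separately, using the structural lemmas already proved about $G'$.

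\textbf{Direction 1: a $k$-color 1-hop coloring of $G$ yields a $k'$-color $h$-hop coloring of $G'$.} First I would fix a valid 1-hop coloring $c$ of $G$ using colors $\{0,\dots,k-1\}$. I extend $c$ to a coloring $c'$ of $G'$ as follows: every node $v\in V$ keeps its color $c(v)$; every node in $V'\setminus V$ receives a fresh color, all distinct, taken from $\{k,\dots,k+m-1\}$, where $m=(h'n)+1$ (resp.\ $(h'n)-1$) is the count from Lemma~\ref{lemmaG'}. So $k'=k+m$. Then I must verify $c'$ is a valid $h$-hop coloring of $G'$, i.e.\ any two nodes at distance $\le h$ in $G'$ get different colors. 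By Lemma~\ref{lemmah-1} and Lemma~\ref{lemmaV'}, any pair involving a node of $V'\setminus V$ is automatically safe since those nodes all carry pairwise-distinct colors not shared with $V$. The only remaining case is two nodes $v_1,v_2\in V$ with $d_{G'}(v_1,v_2)\le h$; here I would invoke Constraint~\ref{C2} (in its verified form: if $d_G(v_1,v_2)\ge 2$ then $d_{G'}(v_1,v_2)\ge h+1$) to conclude that $d_{G'}(v_1,v_2)\le h$ forces $d_G(v_1,v_2)\le 1$, hence $v_1,v_2$ are adjacent in $G$ (or equal) and so $c(v_1)\ne c(v_2)$ by validity of $c$.

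\textbf{Direction 2: a $k'$-color $h$-hop coloring of $G'$ yields a $k$-color 1-hop coloring of $G$.} Given a valid $h$-hop coloring $c'$ of $G'$ with $k'$ colors, I restrict it to $V$, obtaining a coloring $c$ of $G$. By Lemma~\ref{lemmaG'} the nodes of $V'\setminus V$ use exactly $m$ colors among themselves, and by Lemma~\ref{lemmaV'} none of these $m$ colors is used on $V$; hence the restriction $c$ uses at most $k'-m=k$ colors. It remains to check $c$ is a valid 1-hop coloring of $G$: if $v_1,v_2$ are adjacent in $G$, then by Constraint~\ref{C1} they are at distance $\le h$ in $G'$, so $c'(v_1)\ne c'(v_2)$, i.e.\ $c(v_1)\ne c(v_2)$.

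Combining the two directions gives the Lemma, and together with the earlier Lemma that $k$-color $h$-hop coloring is in NP and the fact (from~\cite{gar79}) that $k$-color 1-hop coloring is NP-complete, this completes the proof of Theorem~\ref{Thcomplexity} for $h\ge 2$; the case $h=1$ is~\cite{gar79}. The main obstacle I anticipate is not the coloring bookkeeping but the verification of Constraints~\ref{C1} and~\ref{C2} from the explicit construction of $E'$ — in particular checking that the chains of intermediate nodes $U_1,\dots,U_{h'}$ realize distance exactly $h$ between former neighbors and at least $h+1$ between former 2-hop pairs, separately in the odd and even cases (the even case being more delicate because of the edge-indexed layer $U_{h'}$ and the clique $E_5$). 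This is really where the parity-dependent definition of $G'$ earns its keep, and I would treat the two parities as separate short computations on shortest paths in $G'$.
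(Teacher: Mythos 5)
Your proposal is correct and takes essentially the same route as the paper: both directions combine Lemma~\ref{lemmaG'} and Lemma~\ref{lemmaV'} with the distance properties guaranteed by the construction (Constraints~\ref{C1} and~\ref{C2}), extending the coloring with $m$ fresh colors in one direction and restricting it to $V$ in the other, giving $k'=k+m$. You are in fact slightly more explicit than the paper in checking validity of the extended coloring of $G'$ (the paper leaves that step, like the parity-dependent distance verification you flag, implicit as ``by construction of $G'$'').
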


\proof Given a one-hop coloring of $G$ with $k$ colors, we want
to show that there exists a $h$-hop coloring of $G'$ with $k'$
colors as follows. According to Lemma~\ref{lemmaG'}, this $h$-hop coloring will use $k$ colors for
nodes in $V$ and $m$ colors for nodes in $
V'\setminus V$ with $m$ is equal to $(h'\cdot n)+1$ if $h$ is an odd number, and $(h'\cdot n)-1$ if $h$ is an even number. From Lemma~\ref{lemmaV'}, colors used in $V$ cannot be
reused in $V' \setminus V$. It follows that there exists a $h$-hop
coloring of $G'$ with exactly $k'=k+m$ colors.

Now, let us assume that we have a $h$-hop coloring of
$G'$ with $k'$ colors and we want to show that we can find a one-hop
coloring of $G$ with $k$ colors. From Lemma~\ref{lemmaG'}, $m$ colors are
needed for $h$-hop coloring of nodes in $V'\setminus V$. From
Lemma~\ref{lemmaV'}, colors used in $V$ cannot be reused in $V'\setminus V$.
Hence, $k'-m$ colors are used to color the nodes in $V$. Moreover, since
any two nodes $v1$ and $v2$ in $V$ that are one-hop neighbors in $G$
are $h$-hop neighbors in $G'$, by construction of $G'$, we deduce
that no two one-hop neighbors in $G$ use the same color. Hence, we
can find a valid one-hop coloring of $G$ with $k=k'- m$ colors.
\endproof

\newpage
\section{Coloring optimized for dense networks \label{Optimization}}
The goal of this section is to make possible the use of the coloring algorithm in dense wireless sensor networks. We show how to reduce the overhead in terms of memory required to store the data maintained by each node and bandwith used by exchanging messages between neighbors. Of course this overhead reduction must not decrease the performance of the coloring algorithm: the number of colors and the number of rounds needed to color all network nodes must be the same. First, we give the basic principles of 3-hop coloring.

\subsection{Basic principles of 3-hop node coloring}
As previously said, 3-hop node coloring is necessary to support unicast transmissions with immediate acknowledgement in case of general communications, where any node is likely to exchange information with any neighbor node.

\noindent In SERENA, any node $u$ proceeds as follows to color itself:
\begin{enumerate}
\item Node $u$ characterizes the set $\mathcal{N}(u)$ of nodes that cannot have the same color as itself. This set depends on the type of:
\begin{itemize}
\item \textit{communications supported}: unicast and/or broadcast;
\item \textit{application}: general where any node is likely to exchange information with any neighbor node or on the contrary tree type where a node exchanges information only with its parent and its children in the data gathering tree;
\item \textit{acknowledgement for unicast transmissions}: immediate or deferred.
\end{itemize}
In our case, the set $\mathcal{N}(u)$ is the set of neighbors up to 3-hop.

\item Node $u$ computes its priority. This priority consists of two components: the most important one, denoted $prio$, is equal to the number of nodes up to two-hop from $u$. The second one denotes the address of the node. By definition, node $u$ is said to have a priority higher than node $v$ if and only if:
\begin{itemize}
\item either $prio(u) > prio(v)$;
\item or $prio(u)= prio(v)$ and $adress(u) < adress(v)$.
\end{itemize}

\item Node $u$ applies the two following rules:
\begin{itemize}
\item \textbf{Rule R1}: Node $u$ colors itself if and only if it has a priority strictly higher than any uncolored node in $\mathcal{N}(u)$.
\item \textbf{Rule R2}: To color itself, node $u$ takes the smallest color unused in $\mathcal{N}(u)$.
\end{itemize}
\end{enumerate}

\subsection{Motivations and optimization principles}
\noindent This distributed coloring algorithm proceeds by iterations or rounds, where nodes exchange their $Color$ message. In a naive implementation, this message would include the priority and color of 1) the node $u$ itself, 2) its 1-hop neighbors in $\mathcal{N}(u)$, as well as 3) its 2-hop neighbors in $\mathcal{N}(u)$. The data locally maintained by any wireless sensor would include these data as well as the priority and color of any $3$-hop neighbor. It is well known that the average number of nodes in the neighborhood up to 2-hop is in O($2^2 \cdot \Pi \cdot density \cdot R^2)$, where $density$ stands for the number of nodes per square meter and $R$ is the transmission range. Such an overhead can be unacceptable for wireless sensors with limited storage and processing capabilities as well as low energy. Dense networks with limited bandwidth, low energy and a short MAC frame size become challenging for a coloring algorithm. 
 That is why, we propose in this paper an optimization of the coloring algorithm reducing the size of $Color$ messages exchanged, the size of data structures maintained, while keeping a low complexity. We also show that this overhead reduction does not increase the convergence time of the coloring algorithm. More precisely, we have the following property:
\begin{property}
\label{PropertyP1}
Let $u$ be any node coloring itself at round $r>0$ and $v \in \mathcal{N}(u)$ be the next node to color itself. Node $v$ colors itself at round $r+h$, where $v$ is a $h$-hop neighbor of $u$, with $1 \le h \le 3$.\\
\end{property}

\noindent The optimization principles are based on the following remarks:
\begin{itemize}
\item It is necessary that any node $u$ knows the highest priority of its uncolored neighbors up to 3-hop in order to apply Rule R1. Furthermore, node $u$ must send information concerning itself, its 1-hop and 2-hop neighors to let its one-hop neighbors know information about their 1-hop, 2-hop and 3-hop neighbors.
Hence, node $u$ must send its priority, the highest priority of its 1-hop neighbors as well as the highest priority of its 2-hop neighbors. However, this would not suffice:  Property~\ref{PropertyP1} would be violated. Node $v$, 2-hop away from node $u$ colored at round $r$ would not know at round $r+2$ that it has the highest priority. Hence, the TWO highest priorities at respectively 1-hop and 2-hop must be maintained and sent. The highest priority at 3-hop is locally computed.
\item Similarly for the color, node $u$ must know the colors already used in its neighborhood up to 3-hop. However, it does not matter $u$ to know which node has which color, but only which colors are taken at 1-hop, 2-hop and 3-hop respectively. That is why, we use the fields $color-bitmap1$, $color-bitmap2$ and $color-bitmap3$ for the colors used at 1-hop, 2-hop and 3-hop respectively.
\end{itemize}

\newpage
\subsection{Optimized coloring algorithm} 
\subsubsection{The $Color$ message}
The format of message Color is depicted hereafter:
\begin{verbatim}
 0                   1                   2                   3
 0 1 2 3 4 5 6 7 8 9 0 1 2 3 4 5 6 7 8 9 0 1 2 3 4 5 6 7 8 9 0 1
 +-+-+-+-+-+-+-+-+-+-+-+-+-+-+-+-+-+-+-+-+-+-+-+-+-+-+-+-+-+-+-+-+
 |    Type       |     Originator Address        |     color     |
 +-+-+-+-+-+-+-+-+-+-+-+-+-+-+-+-+-+-+-+-+-+-+-+-+-+-+-+-+-+-+-+-+
 |    prio       |  size_max2_   |  max2_prio1
 |               |       prio1   |
 +-+-+-+-+-+-+-+-+-+-+-+-+-+-+-+-+-+-+-+-+-+-+-+-+-+-+-+-+-+-+-+-+
                 |      ....
 +-+-+-+-+-+-+-+-+-+-+-+-+-+-+-+-+-+-+-+-+-+-+-+-+-+-+-+-+-+-+-+-+
 |   size_max2_  |       max2_prio2                              |
 |   prio2       |                                               |
 +-+-+-+-+-+-+-+-+-+-+-+-+-+-+-+-+-+-+-+-+-+-+-+-+-+-+-+-+-+-+-+-+
                ....                             |  size_bitmap1 |
 +-+-+-+-+-+-+-+-+-+-+-+-+-+-+-+-+-+-+-+-+-+-+-+-+-+-+-+-+-+-+-+-+
 |color_bitmap1    ...
 +-+-+-+-+-+-+-+-+-+-+-+-+-+-+-+-+-+-+-+-+-+-+-+-+-+-+-+-+-+-+-+-+
 |  size_bitmap2 | color_bitmap2 ...
 +-+-+-+-+-+-+-+-+-+-+-+-+-+-+-+-+-+-+-+-+-+-+-+-+-+-+-+-+-+-+-+-+
\end{verbatim}

\noindent By definition, for any node $u$, $prio(u)$ is equal to the sum of the numbers of 1-hop neighbors of its 1-hop neighbors. This computation is done during the initialization of the coloring algorithm. We also define $max2\_prio1(u)$ as:
\begin{itemize}
\item the two highest priorities of the uncolored 1-hop neighbors of $u$, if two such nodes exist;
\item the priority of the only one uncolored 1-hop neighbor, if only one such node exists;
\item empty, if none exists.\\
\end{itemize}

\noindent Similarly, we define $max2\_prio2(u)$ as the two highest priorities of the uncolored 1-hop neighbors of the 1-hop neighbors of $u$, if they exist. The variable $max\_prio3(u)$ is defined as the highest priority of the uncolored two-hop neighbors of the 1-hop neighbors of $u$. The computation of $max2\_prio1(u)$, $max2\_prio2(u)$ and $max\_prio3(u)$ is done from the $Col$ messages received during the current round. The values computed are inserted in the $Color$ message sent by node $u$.\\

\noindent Notice that the size of the $Color$ message is variable for two reasons. Since $max2\_prio1$ (resp. $max2\_prio2$) can contain 0, 1, or 2 priority values, its size is given in the field $size\_max2\_prio1$ (resp. $size\_max2\_prio2$).
Furthermore, the size of the bitmaps used at 1-hop and 2-hop respectively depends on network topology. We introduce the fields $size\_bitmap1$ and $size\_bitmap2$ to contain these sizes.
 
\subsubsection{Processing}

\noindent With this optimization, Rule R1 becomes:  Any node $u$ colors itself if and only if $Priority(u) = max \{max2\_prio1(u), max2\_prio2(u), max\_prio3(u)\}$.\\

\noindent Rule R2 becomes: Node $u$ selects the smallest color unused in  $color\_bitmap1(u)$ $\cup$ $color\_bitmap2(u)$ $\cup$ $color\_bitmap3(u)$.\\

\noindent Notice that this color should also not be used by heard nodes (nodes with which there is no symmetric link). This, in order to avoid color conflicts.\\

\noindent If at a round $r$>1 of the coloring algorithm, node $u$ does not receive a message from its 1-hop neighbor $v$, it uses the information received from $v$ at round $r-1$.\\

\noindent The coloring algorithm ends when node $u$ as well as all its 1-hop, 2-hop and 3-hop neighbors are colored.\\

\noindent When a node computes $max2\_prio1$, $max2\_prio1$ and $max\_prio3$ from the values received in the $Color$ messages, it discards any value corresponding to an already colored node.

\newpage
\section{Impact of priority assignment on grid coloring \label{Performance}}

We now consider grid topologies. Such topologies exhibit regularity properties. We want to see if SERENA is able to preserve such regularity in the coloring. Hence, we evaluate the performance of SERENA coloring algorithm in various grid topologies by simulation. Does the coloring keep some regularity of the grid topology? Can we find on the colored topology a color pattern that is reproduced several times? 
As said in Section~\ref{Problem}, performance of SERENA is evaluated by the number of colors and the number of rounds needed to color all network nodes.\\

In all the grids considered, we assume a transmission range higher than or equal to the grid step in order to get the radio connectivity. For simplicity reasons, the transmission range $R$ is expressed as a function of the grid step, that is considered as the unit. Hence, $R \geq 1$. Moreover, we assume an ideal environment where any node $u$ is able to communicate via a symmetric link to any node $v$ such that $d(u,v) \leq R$, where $d(u,v)$ denotes the euclidian distance from $u$ to $v$.

\subsection{Impact of node number}

We first assume a radio range equal to the grid step. In other words, the neighbors at the grid sense are also neighbors in the communication sense. We first consider a 10x10 grid with 100 nodes. Assuming a priority assignment as described in Section~\ref{Optimization}, SERENA obtains 13 colors as shown in Table~\ref{TableSerenaRandom}.  
If now, we consider a 30x30 grid with 900 nodes, SERENA gets 16 colors for the same priority assignment as previously. Why do not we have the same number of colors? The reason is given by the node priority assignment. All nodes that are not border ones have the same number of 1-hop and 2-hop neighbors respectively. Hence, ties are broken by means of node address. A random node address assignment leads to a non-optimal number of colors. Table~\ref{TableSerenaRandom} provides the number of colors and rounds obtained for different values of the transmission range. We can conclude that the number of colors strongly depends on the density of nodes and weakly on the number of nodes. The number of rounds depends on the number of nodes.
\newpage
\begin{table}[!h]
\caption{Number of colors obtained by SERENA with a random priority assignment for various transmission ranges and grid sizes.}
\label{TableSerenaRandom}
\begin{tabular}{|c|c|c|c|}
\hline
Radio range & grid size & colors &rounds\\
\hline
1 & 10x10 & 13 & 65\\
  \cline{2-4}
  & 20x20 & 14 & 86\\
  \cline{2-4}
  & 30x30 & 16 & 97\\
\hline
1.5 & 10x10 & 26 & 109\\
  \cline{2-4}
  & 20x20 & 28 & 157\\
  \cline{2-4}
  & 30x30 & 28 & 179\\
\hline
2 & 10x10 & 36 & 171\\
  \cline{2-4}
  & 20x20 & 41 & 257\\
  \cline{2-4}
  & 30x30 & 44 & 298\\
\hline 
\end{tabular} 
\end{table}

\subsection{Impact of priority assignment}
Another address assignment produces another coloring using 8 colors for the 10x10 grid and a radio range of 1, as shown in Table~\ref{TableSerenaVariousPrioNodeNb}, where the '*' symbol highlights the optimality of the number of colors used. 
\begin{table}[!h]
\caption{Number of colors obtained by SERENA for various transmission ranges, grid sizes and priority assignments.}
\label{TableSerenaVariousPrioNodeNb}
\begin{tabular}{|c|c|c|c|}
\hline
Radio range & grid size & priority assignment & colors\\
\hline
1 & 10x10 & line & 8*\\
          \cline{3-4}
          && column & 8*\\
          \cline{3-4}
          && diagonal &8*\\
          \cline{3-4}
          && distance to origin &8*\\
    \cline{2-4}
  & 20x20 & line & 15\\
           \cline{3-4}
          && column & 15\\
          \cline{3-4}
          && diagonal &8*\\
          \cline{3-4}
          && distance to origin &8*\\
\hline
2 & 10x10 & line & 30\\
          \cline{3-4}
          && column & 30\\
          \cline{3-4}
          && diagonal &28\\
          \cline{3-4}
          && distance to origin &30\\
  \cline{2-4}
  & 20x20 & line & 33\\
          \cline{3-4}
          && column & 33\\
          \cline{3-4}
          && diagonal &29\\
          \cline{3-4}
          && distance to origin &30\\
 \hline 
\end{tabular} 
\end{table}

The question is can we find a priority assignment in grid topologies such that the coloring does not depend on node number but only on radio range? Moreover, can we find a color pattern that can tile the whole topology?

\newpage
\section{Theoretical results in grid topologies\label{Theoretical}}
The goal of this section is to determine the optimal color number for the 3-hop coloring of grids with various transmission ranges. 
In this paper, we only study grid colorings that reproduce periodically a color pattern. As a consequence, the optimality of a coloring obtained is only true in the class of periodic colorings.

\subsection{Notation and definitions}
We adopt the following notation and definitions:
Let $R$ denote the transmission range.
\begin{definition}
For any integer $g>0$, for any node $u$ that is not a border node in the grid, its g-square is defined as the square centered at $u$, with a square side equal to 2g. The g-square contains exactly $8g$ nodes.
\end{definition}
\begin{definition}
For any integer $g>0$, for any node $u$ that is not a border node in the grid, its g-diamond is defined as the diamond centered at $u$, with a diagonal length equal to $2g$. The g-diamond contains exactly $4g$ nodes.
\end{definition}
\begin{definition}
\label{basicpatterndef}
A basic color pattern is the smallest color pattern that can be used to periodically tile the whole grid.
\end{definition}
\begin{definition}
\label{optimalpatterndef}
A basic color pattern is said optimal if and only if it generates an optimal periodic coloring of the grid.
\end{definition}

\subsection{Properties independent of the transmission range}
We can now give properties that do not depend on the transmission range value.
\begin{property}
Any color permutation of an optimal basic pattern is still valid and optimal.
\end{property}
\proof
With the color permutation, no two nodes that are 1-hop, 2-hop or 3-hop neighbors have the same color. Hence, the permuted coloring obtained is still valid. The permutation keeps unchanged the number of colors. Hence the coloring is still optimal.
\endproof
\begin{property}
\label{Pcolgrid}
Given an optimal color pattern of any grid and the color at node of coordinates (0,0), we can build a 3-hop coloring of a grid topology based on this pattern such that the color of node $(0,0)$ is the given color.
\end{property}
\proof
The coloring of the grid is obtained by setting the optimal color pattern in such a way that the color of node $(0,0)$ is the given color. The pattern is then reproduced to tile the whole topology.
\endproof

\begin{property}
Knowing an optimal color pattern of its grid and the color at node of coordinates $(0,0)$, each node can locally determine its own color based on its coordinates $(x,y)$. The 3-hop coloring obtained for the grid is optimal in terms of colors and rounds.
\end{property}
\proof The 3-hop coloring obtained for the grid only requires each node to know the color of node $(0,0)$, its coordinates in the grid and the optimal pattern to apply. Hence, it is optimal in terms of colors and rounds.
\endproof

\subsection{Optimal coloring for various transmission ranges}
We now prove the optimal coloring of grids for various transmission ranges: $R=1$, $R=1.5$ and $R=2$.

\subsubsection{Transmission range = grid step}
In this section, we assume a transmission range equal to the grid step.

\begin{definition}
In a grid with a transmission range equal to the grid step, a non-border node is a node in the grid that has exactly 4 1-hop neighbors, 8 2-hop neighbors and 12 3-hop neighbors. 
\end{definition}
For any non-border node $u$, its neighborhood up to 3-hop, $\mathcal{N}(u)$ is illustrated in Figure~\ref{Neighbor3hopFig}, where nodes $a, b, c, d$ denote the 1-hop neighbors, nodes $e$ to $l$ the 2-hop neighbors and nodes $m$ to $y$ the 3-hop neighbors.
\begin{figure}[!h]
\begin{verbatim}
      r
    s h q
  t i b g p
v j c u a f o
  w k d e n
    x l m
      y
\end{verbatim}
\caption{Neighborhood up to 3-hop of node $u$, R=grid step.}
\label{Neighbor3hopFig}
\end{figure}

\textit{a) Optimal color pattern in a grid}\\
$ $\\
Our methodology consists in providing a valid coloring of any non-border node $u$ of the grid as well as all nodes up to 3-hop from $u$. Then we use the coloring obtained to color the whole grid.
\begin{theorem}
The optimal 3-hop coloring of a grid topology with a transmission range equal to the grid step requires exactly 8 colors. An optimal color pattern is given in Figure~\ref{NeighborPatternFig}.
\begin{figure}[!h]
\begin{verbatim}
      4
    5 8 3
  4 7 2 6 4
5 8 3 1 5 8 3
  2 6 4 7 2
    5 8 3
      2
\end{verbatim}
\caption{Coloring of node $u$ and its neighborhood up to 3-hop, R= grid step}
\label{NeighborPatternFig}
\end{figure}
\end{theorem}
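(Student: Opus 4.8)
Model the grid as $\mathbb{Z}^2$. Since the transmission range equals the grid step, two grid nodes are $1$-hop neighbors exactly when they differ by $(\pm1,0)$ or $(0,\pm1)$; consequently the number of hops between $(x_1,y_1)$ and $(x_2,y_2)$ equals the Manhattan distance $|x_1-x_2|+|y_1-y_2|$, and a $3$-hop coloring is valid iff any two nodes at Manhattan distance at most $3$ receive different colors. The plan is to prove the two inequalities separately: a lower bound showing that $8$ colors are necessary, via a clique of the $3$-hop conflict graph, and a matching construction -- which will be precisely the pattern of Figure~\ref{NeighborPatternFig} -- showing that $8$ colors suffice.

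For the lower bound I would fix a non-border node $u$ and one of its $1$-hop neighbors $a$, and consider the set $S$ consisting of $u$, $a$, and all of their $1$-hop neighbors. Since $u$ and $a$ are adjacent, $S$ has exactly $5+5-2=8$ distinct elements ($u$, $a$, the three other $1$-hop neighbors of $u$, and the three other $1$-hop neighbors of $a$), all of which lie within the radius-$3$ diamond around $u$ of Figure~\ref{Neighbor3hopFig}. Any two nodes $x,y\in S$ are at hop distance at most $3$: if $x$ and $y$ are both $1$-hop neighbors of the same node then $d(x,y)\le 2$, and otherwise $d(x,y)\le d(x,u)+d(u,a)+d(a,y)\le 1+1+1=3$. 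Hence these $8$ nodes are pairwise in conflict and must receive $8$ distinct colors, so every valid $3$-hop coloring of the grid uses at least $8$ colors.

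For the upper bound I would exhibit the coloring $c$ read off Figure~\ref{NeighborPatternFig}: it depends only on $(x\bmod 4,\ y\bmod 4)$, and one checks that the resulting $4\times4$ color block is invariant under the diagonal shift $(x,y)\mapsto(x+2,y+2)$; equivalently, $c$ is periodic with respect to the lattice $\Lambda$ generated by $(2,2)$ and $(2,-2)$. Because $\det\Lambda=8$, the coloring uses exactly $8$ colors. Two nodes get the same color under $c$ iff their difference lies in $\Lambda$, so $c$ is a valid $3$-hop coloring iff every nonzero vector of $\Lambda$ has Manhattan norm at least $4$; and indeed
\[
\|a(2,2)+b(2,-2)\|_1 \;=\; 2|a+b|+2|a-b| \;=\; 4\max(|a|,|b|)\;\ge\;4 \qquad\text{for }(a,b)\neq(0,0).
\]
Since $c$ is $\Lambda$-periodic and the radius-$3$ diamond around a non-border node captures that node's entire $3$-hop neighborhood, this single inequality is exactly what makes the local check of Figure~\ref{NeighborPatternFig} (each of the $8$ colors present, repeats of any color only at distance $\ge 4$) certify validity everywhere. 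A concrete finite grid is then colored by placing this pattern and propagating it as in Property~\ref{Pcolgrid}; border nodes only remove constraints, so validity is preserved.

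Combining the two bounds, the chromatic number of the $3$-hop conflict graph is exactly $8$, and the coloring of Figure~\ref{NeighborPatternFig}, being valid with $8$ colors, is optimal; its $8$-cell fundamental domain (the $4\times4$ block collapsed by the $(2,2)$-symmetry) is smallest possible since $8$ colors are required, so it is a basic optimal pattern in the sense of Definitions~\ref{basicpatterndef}--\ref{optimalpatterndef}. I expect the only real difficulty to be organizational rather than deep: one must (i) pick the right small clique for the lower bound, and (ii) phrase the upper-bound verification so that checking a single diamond genuinely implies global validity -- which is precisely what translation-invariance under $\Lambda$ provides.
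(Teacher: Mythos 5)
Your proposal is correct, and it takes a genuinely different route from the paper on both halves. For the lower bound, the paper argues incrementally: $u$ needs its own color, its four 1-hop neighbors are pairwise 2-hop neighbors so need four more, and the 2-hop ring cannot reuse those and needs three further colors (an argument that is partly interleaved with the construction and somewhat informal about why exactly three new colors are forced on the 2-hop ring); your explicit 8-clique ($u$, one neighbor $a$, and all their 1-hop neighbors, pairwise at Manhattan distance at most 3) makes the necessity of 8 colors immediate and airtight. For the upper bound, the paper colors the 3-hop diamond of $u$ greedily (handling the 3-hop ring case by case), then states the replication rules $(x\pm4,y)$, $(x,y\pm4)$, $(x\pm2,y\pm2)$ and verifies validity of the tiled coloring by inspection; you instead identify the invariance lattice $\Lambda$ generated by $(2,2)$ and $(2,-2)$, note $\det\Lambda=8$ so there are exactly 8 color classes, and prove validity from the computation that every nonzero lattice vector has $L^1$-norm $4\max(|a|,|b|)\ge 4>3$. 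This is a cleaner global argument than the paper's figure inspection, and it is essentially an instance of the Vector Method the paper only introduces later (the vectors $(2,2)$, $(-2,2)$ with 8 colors appear in Table~\ref{TableVectorGrid}), so your proof also certifies optimality among periodic colorings as a byproduct, whereas the paper's step-3 check is purely local. Both arguments rely on the same modeling fact (hop distance equals Manhattan distance when $R$ equals the grid step) and on border nodes only removing constraints, so the two proofs establish the same statement; yours is more algebraic, the paper's more constructive.
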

\proof
Let $u$ be any non-border node. The proof is done in three steps:
\begin{enumerate}
\item \textit{First step: At least 8 colors are needed to color node $u$ and $\mathcal{N}(u)$.}\\
\textit{First substep:} node $u$ itself requires a color, denoted $1$ for simplicity reasons, that is not used by any other node in $\mathcal{N}(u)$.\\ 
\textit{Second substep:} any 1-hop neighbor of $u$ is 2-hop neighbor of any other 1-hop neighbor of $u$. It follows that any 1-hop neighbor requires a distinct color. We then get 4 colors, denoted $2$ to $5$, for these 1-hop neighbors.\\ 
\textit{Third substep:} we now consider the 2-hop neighbors. Notice that they are at most 3-hop away of any 1-hop neighbor of $u$. Hence, they cannot reuse the colors $2$ to $5$. Moreover, nodes $g$ and $k$ that are 4-hop away can use the same additional color $6$. Similarly, nodes $e$ and $i$, 4-hop away but 2-hop away from color $6$, use an additional color $7$. The remaining 2-hop neighbors, $f$, $h$, $j$ and $l$, 4-hop away, can use the same color. This color would be at most 3-hop away from any already used color. Hence, an additional color $8$ is needed. Hence, at least 8 colors are needed.
\item \textit{Second step: We build a valid coloring of $u$ and $\mathcal{N}(u)$ with 8 colors.} Each 3-hop neighbor of $u$ is 4-hop away from either two or three 1-hop neighbors of $u$. Hence, it can reuse their colors. We consider first 3-hop nodes that have the least color choice, namely nodes like $p$ and $q$. Each of them has the same choice between two colors of two 1-hop neighbors of $u$, namely $c$ and $d$, from which they are 4-hop away. We color first $p$ with an already used color, $4$ for instance. We then have only one possibility for node $q$, color $3$. We proceed similarly for nodes $s$ and $t$ with colors $5$ and $4$, then for nodes $w$, $x$ with colors $2$ and $5$ and finally for nodes $m$ and $n$ with colors $3$ and $2$. Now, we consider the remaining three-hop neighbors of $u$, namely nodes $o$, $r$, $v$ and $y$. At the beginning of this step, these nodes had 3 choices (the color of 3 1-hop neighbors of $u$), but as their 2-hop neighbors are now colored, only one choice remains valid: we take this remaining color. Hence, no additional color is introduced. We have used exactly eight colors to color any node $u$ and its neighborhood up to 3-hop, as depicted in Figure~\ref{NeighborPatternFig}.
\item \textit{Third step: This coloring can be regularly reproduced to constitute a valid coloring of the grid.} We consider the origin at node $u$. 
Observing the coloring depicted in Figure~\ref{NeighborPatternFig}, we notice that any color found at coordinates $(x,y)$ is also found:
\begin{itemize}
\item in the same line, at nodes $(x+4,y)$ and $(x-4,y)$,
\item in the same column, at nodes $(x,y+4)$ and $(x,y-4)$,
\item in the same diagonal, at nodes $(x+2, y-2)$, $(x+2,y+2)$, $(x-2,y+2)$ and $(x-2, y-2)$.
\end{itemize}
We then get a coloring of the grid with exactly 8 colors. We prove that this coloring is valid by checking that any color $1$ to $8$ is reused neither 1-hop, nor 2-hop, nor 3-hop away. It follows that this coloring is valid. Hence, an optimal coloring requires exactly 8 colors.
\end{enumerate}
\endproof

\begin{property}
A basic color pattern of the grid with a transmission range equal to the grid step is given by: 
\begin{verbatim}
    7 2 6 4
    3 1 5 8
\end{verbatim}
\end{property}
\proof
We can extract from the coloring of the grid a basic color pattern containing exactly eight colors. This pattern is periodically reproduced to generate the coloring of the grid. Each color is reproduced according to the rules given previously. 
\endproof
$ $\\
\textit{b) From an optimal color pattern to an optimal 3-hop coloring of a grid}\\
$ $\\
We now show that we can tile the grid topology by reproducing the color pattern previously found. More precisely, a node of coordinates(x,y) in the grid can deduce its color in an optimal 3-hop grid coloring defined as said in property~\ref{Pcolgrid}.
\begin{theorem}
Let $\mathcal{P}$ be an optimal color pattern for any grid and $c0$ the color of node of coordinates (0,0) in the grid, with $1 \leq c0 \leq 8$. The color of any point with coordinates $(x,y)$ in the grid is given by the color of coordinates $(x',y')$ in the pattern $\mathcal{P}$, where the point of color $c0$ is chosen as the origin and with $x'= x\ modulo\ 4$ and $y'=y \ modulo\ 2$.
\end{theorem}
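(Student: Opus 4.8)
The plan is to reduce the statement to the structure of the translation group of the optimal coloring. By the previous theorem, the optimal $3$-hop coloring of the grid with transmission range equal to the grid step uses exactly $8$ colors, and (step~3 of its proof) any color occurring at a node of coordinates $(x,y)$ occurs again at $(x\pm 4,y)$, at $(x,y\pm 4)$ and at $(x\pm 2,y\pm 2)$. Let $L\subseteq\mathbb{Z}^2$ be the lattice generated by these vectors. Since $(2,-2)=(4,0)-(2,2)$ and $(0,4)=2(2,2)-(4,0)$, one has $L=\langle(4,0),(2,2)\rangle$, whose fundamental domain contains $|\,\mathbb{Z}^2/L\,|=8$ points. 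First I would note that, by the relations above, the coloring is invariant under all of $L$, hence the color is constant on each coset of $L$; as exactly $8$ colors occur and there are exactly $8$ cosets, pigeonhole forces each coset to carry a distinct color, so the coloring induces a bijection $\mathbb{Z}^2/L\to\{1,\dots,8\}$. Consequently, prescribing the color of one representative of each coset determines the whole coloring, which is exactly what the pattern $\mathcal{P}$, together with Property~\ref{Pcolgrid}, does.

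Second, I would identify the $4\times 2$ block $B=\{\,(j,i):0\le j\le 3,\ 0\le i\le 1\,\}$ as a complete, irredundant system of coset representatives for $L$. Two points of $B$ differ by a vector $(\Delta x,\Delta y)$ with $|\Delta x|\le 3$ and $|\Delta y|\le 1$; writing a generic element of $L$ as $a(2,2)+b(2,-2)$, one needs $\Delta y=2(a-b)$, which forces $a=b$ and hence $\Delta x=4a$, so $\Delta x=0$ in the admissible range. Thus distinct points of $B$ are non-congruent modulo $L$, and since $|B|=8=|\mathbb{Z}^2/L|$, $B$ is a transversal. This matches the \emph{basic} color pattern of Definition~\ref{basicpatterndef} with the restriction of the optimal coloring to such a $4\times 2$ block; translating the block so that the cell carrying the prescribed color $c_0$ sits at the origin produces exactly the pattern $\mathcal{P}$ of the statement, for any $c_0\in\{1,\dots,8\}$ by Property~\ref{Pcolgrid} (and any color permutation leaves this valid, by the earlier Property).

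Third, I would make the reduction $(x,y)\mapsto(x',y')$ explicit. Given $(x,y)$, subtract $\lfloor y/2\rfloor$ copies of the diagonal generator $(2,2)$ to reach $\bigl(x-2\lfloor y/2\rfloor,\ y\bmod 2\bigr)$, then subtract the suitable multiple of $(4,0)$ to bring the first coordinate into $\{0,1,2,3\}$. This yields $y'=y\bmod 2$ and $x'$ equal to $x$ reduced modulo $4$ after the half-period diagonal shift that the pattern carries between two consecutive rows; since the subtracted vectors lie in $L$ and the color is constant on $L$-cosets, $\operatorname{color}(x,y)=\mathcal{P}(x',y')$, giving the claimed dependence on $x\bmod 4$ and $y\bmod 2$. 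Finally I would observe that, conversely, a coloring built in this way from $\mathcal{P}$ is valid and optimal, because it agrees on every $L$-coset with the coloring of Figure~\ref{NeighborPatternFig} and hence reuses no color within $1$, $2$ or $3$ hops, and it is optimal in rounds since each node computes its color locally from $(x,y)$.

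The finite verifications — that $B$ is a transversal and that the two generators $(4,0)$ and $(2,2)$ suffice to push any $(x,y)$ into $B$ — are short and routine. The delicate point, and the one I expect to be the main obstacle, is the bookkeeping of the diagonal half-period: the pattern is only two rows high but its vertical tiling period is $4$, so the column index of the pattern cell is $x$ modulo $4$ only relative to the origin displaced by $\lfloor y/2\rfloor\,(2,2)$; getting this offset exactly right, and consistent with the orientation of $\mathcal{P}$ fixed in the statement, is the crux of the argument.
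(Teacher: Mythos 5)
Your proposal is correct in substance, and it takes a genuinely more careful, more algebraic route than the paper. The paper's own proof is a one-line construction: it positions the pattern so that $(0,0)$ receives $c0$ and then reproduces colors along lines by $(\pm4,0)$, along columns by $(0,\pm4)$ and along diagonals by $(\pm2,\pm2)$, concluding ``hence the theorem''; it never actually derives the coordinate-wise reduction claimed in the statement. You instead work with the translation lattice $L=\langle(4,0),(2,2)\rangle$, show the coloring is constant on $L$-cosets, count $\lvert\mathbb{Z}^2/L\rvert=8$ to get a bijection between cosets and colors, check that a $4\times2$ block is a transversal, and reduce $(x,y)$ into that block. This extra precision pays off: the ``half-period diagonal shift'' you flag as the crux is not a bookkeeping nuisance but exactly where the statement, read literally, fails. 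A coloring satisfying $\mathrm{color}(x,y)=\mathcal{P}(x\bmod 4,\,y\bmod 2)$ would be invariant under the translation $(0,2)$, forcing $\mathrm{color}(0,0)=\mathrm{color}(0,2)$; yet $(0,2)\notin L$ and these two nodes are $2$-hop neighbors when $R$ equals the grid step (in Figure~\ref{NeighborPatternFig} they carry colors $1$ and $8$), so no valid $3$-hop coloring can have that form. The correct reduction is the one your own computation produces, $x'=(x-2\lfloor y/2\rfloor)\bmod 4$ and $y'=y\bmod 2$ (equivalently, the $4\times2$ pattern must be tiled brick-style, with each successive pair of rows shifted horizontally by $2$), so the color really depends on $y\bmod 4$ and not merely $y\bmod 2$. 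Accordingly, do not smooth this over in your third paragraph by saying it yields ``the claimed dependence on $x\bmod 4$ and $y\bmod 2$'': your formula is the right one, the theorem's literal formula is not, and the paper's proof does not notice because it never confronts the modular indexing at all.
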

\proof
We position the optimal color pattern in such a way that $c0$ is the color of node of coordinates (0,0) in the grid, we then reproduce: 
\begin{itemize}
\item in each line, the color of $(x,y)$ at nodes $(x+4,y)$ and $(x-4,y)$,
\item in each column, the color of $(x,y)$ at nodes $(x,y+4)$ and $(x,y-4)$,
\item in each diagonal, the color of $(x,y)$ at nodes $(x+2, y-2)$, $(x+2,y+2)$, $(x-2,y+2)$ and $(x-2, y-2)$.
\end{itemize}
Hence, the theorem.
\endproof

\subsubsection{Transmission range = 1.5 x grid step}

Assuming a transmission range equal to 1.5 grid step, we notice that each node $u$ that is not a border one has exactly at the communication sense:
\begin{itemize}
\item 8 1-hop neighbors: such nodes belong to the 1-square. 
\item 16 2-hop neighbors: such nodes belong to the 2-square.
\item 24 3-hop neighbors: such nodes belong to the 3-square.
\end{itemize}
For any non-border node $u$, its neighborhood up to 3-hop, $\mathcal{N}(u)$ is illustrated in Figure~\ref{Neighbor3hopwith1.5Fig}, where nodes $a, b, c, d, e, f, g, h$ denote the 1-hop neighbors, nodes $i$ to $y$ the 2-hop neighbors and nodes $z$ to $w'$ the 3-hop neighbors.
\begin{figure}[!h]
\begin{verbatim}
k' j' i' h' g' f' e'
l' p  o  n  m  l  d'
m' q  d  c  b  k  c'
n' r  e  u  a  j  b'
o' s  f  g  h  i  a'
p' t  v  w  x  y  z
q' r' s' t' u' v' w'     
\end{verbatim}
\caption{Neighborhood up to 3-hop of node $u$, $R=1.5$.}
\label{Neighbor3hopwith1.5Fig}
\end{figure}

\begin{theorem}
An optimal coloring of a grid with a transmission range equal to 1.5 times the grid unit needs exactly 16 colors. An example of optimal coloring is given by the following color pattern:
\begin{figure}[!h]
\begin{verbatim}
 9 16  7  8  9 16  7
13 10 11 12 13 10 11
 3 14  5  4  3 14  5
 2 15  6  1  2 15  6
 9 16  7  8  9 16  7
13 10 11 12 13 10 11
 3 14  5  4  3 14  5   
\end{verbatim}
\caption{Coloring of $u$ and its neighborhood up to 3-hop, R=1.5.}
\label{ColorNeighbor3hopwith1.5Fig}
\end{figure}
\end{theorem}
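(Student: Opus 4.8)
The plan is to mirror, for the case $R=1.5$, the three-step structure used for the case $R=1$ (transmission range equal to the grid step). First I would establish the lower bound: at least 16 colors are needed to color any non-border node $u$ together with its neighborhood up to 3-hop $\mathcal{N}(u)$. Here the key observation is that, with $R=1.5$, the 3-hop neighborhood of $u$ is exactly the $3$-square (a $7\times7$ block of nodes centered at $u$, minus the corners that fall outside — in fact all $49$ positions of the displayed figure), so $u$ and its $1$-hop and $2$-hop neighbors together form a set in which \emph{every} pair is at distance $\le 3$ in the communication graph: indeed, any two nodes inside the $2$-square are within Chebyshev distance $\le 2$, hence within communication distance $\le 2 < 3$ when $R=1.5$ (two grid-steps in Chebyshev metric are reachable in two hops). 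Wait — I need to be careful: I would verify precisely that $u$, its $8$ nodes of the $1$-square, and its $16$ nodes of the $2$-square — that is $1+8+16 = 25$ nodes — are pairwise within $3$ hops, which forces $25$ colors, not $16$. Since the claimed answer is $16$, the correct lower-bound argument must instead count a \emph{clique} of size $16$ in the "conflict graph" (the graph on grid nodes with an edge whenever two nodes are $\le 3$ hops apart). So the first real step is: identify a $4\times4$ sub-block of grid nodes that is a clique in the conflict graph (any two nodes of a $4\times4$ block are within Chebyshev distance $3$, hence within $3$ hops since $R=1.5$ covers a knight-free $3$-step reach), giving the lower bound of $16$.

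Second, I would exhibit the valid coloring. I would take the $4\times4$ basic pattern read off from the displayed figure,
\begin{verbatim}
 9 16  7  8
13 10 11 12
 3 14  5  4
 2 15  6  1
\end{verbatim}
and show it tiles the plane with period $4$ in both directions, i.e. the color at $(x,y)$ depends only on $(x \bmod 4, y \bmod 4)$. The verification that the resulting periodic coloring is valid reduces, by translation invariance, to checking a single representative node (say the one colored $1$) against all nodes within $3$ hops of it: one lists the at most $\approx 48$ relevant offsets (those $(\Delta x,\Delta y)$ with the node at that offset being a $1$-, $2$- or $3$-hop neighbor when $R=1.5$) and confirms none of them carries color $1$ again. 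Because the pattern has full period $4$ in $x$ and $y$, and the $3$-hop neighborhood of a node spans Chebyshev-radius at most $3$, the only potentially-colliding copies of a given color lie at offsets $(\pm4,0),(0,\pm4),(\pm4,\pm4)$ and combinations thereof, all of which are \emph{outside} the $3$-hop ball — so the check is finite and, in fact, largely structural. This simultaneously shows $16$ colors suffice and that the pattern in Figure~\ref{ColorNeighbor3hopwith1.5Fig} is an optimal basic color pattern.

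Third, combining the lower bound ($\ge 16$ from the $4\times4$ clique) with the explicit valid $16$-coloring gives that the optimal periodic $3$-hop coloring of the grid with $R=1.5$ uses exactly $16$ colors, and the displayed pattern realizes it; this is the statement. The main obstacle I anticipate is \emph{not} the tiling step but pinning down the lower bound correctly: one must argue that $16$ — and not some number between $16$ and $25$ — is forced, which requires showing both that there \emph{is} a $16$-clique in the conflict graph (easy: a $4\times4$ block) and that the $16$-coloring above is actually valid (so no larger clique obstruction can push the chromatic number higher). In other words, the lower bound "$\ge 16$" and the upper bound "$\le 16$" must be proved by genuinely different arguments, and the delicate part is making the "two nodes of a $4\times4$ block are $\le 3$ hops apart when $R=1.5$" claim airtight — this rests on the fact that with $R=1.5$ a single hop covers any neighbor within Chebyshev distance $1$ (the $1$-square), so $k$ hops cover Chebyshev distance $k$, and a $4\times4$ block has Chebyshev diameter $3$. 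I would also double-check the edge cases of which $2$-hop and $3$-hop offsets are reachable (e.g. whether $(\pm1,\pm2)$ is a $2$-hop or a different-hop neighbor), since the figure's labelling $i$–$y$ and $z$–$w'$ encodes a specific choice that the coloring's validity depends on.
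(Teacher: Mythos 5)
Your final argument is correct and reaches the theorem by the same overall outline as the paper (a lower bound of 16, an explicit 16-color pattern, and a periodicity/tiling argument), but the two key steps are carried out differently. For the lower bound, the paper argues incrementally: $u$ needs one color, its 8 one-hop neighbors are pairwise 2-hop neighbors and need 8 more, the 2-hop neighbors cannot reuse colors $1$--$9$, the four corners of the 2-square may share one color, and the three middle nodes of the top side and of the left side each force three further colors, giving $1+8+1+3+3=16$. You instead exhibit a clean $16$-clique in the conflict graph: with $R=1.5$ a single hop covers exactly the Chebyshev ball of radius 1, so hop distance equals Chebyshev distance, and any $4\times4$ block is pairwise within 3 hops. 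This is sharper and arguably more rigorous than the paper's mixed counting-and-construction step, and it makes explicit why the bound is $16$ and not more. For the upper bound, the paper builds the $7\times7$ figure by copying rows and columns four steps away and then checks by inspection that the tiled coloring is valid; you observe that the pattern has period 4 in both coordinates, so any two same-colored nodes differ by a multiple of $(4,0)$ or $(0,4)$, hence are at Chebyshev distance at least $4>3$ and cannot conflict --- a structural check that replaces the paper's case-by-case verification. Each approach buys something: the paper's neighborhood-by-neighborhood argument generalizes mechanically to the $R=2$ case treated next (where the Chebyshev shortcut fails because the 1-hop neighborhood is a diamond, not a square), while your Chebyshev argument is shorter and airtight for $R=1.5$.

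One blemish to clean up: your opening claim that $u$ together with its 1-square and 2-square is pairwise within 3 hops is false --- opposite corners of the $5\times5$ block (e.g.\ $(-2,2)$ and $(2,-2)$) are at Chebyshev distance 4, hence 4 hops apart --- so no 25-color obstruction ever arises. You flag the tension and discard it, and your final argument does not depend on it (indeed your valid 16-coloring rules out any clique larger than 16), but in a written proof you should state the resolution directly rather than leave the "wait" in place.
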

\proof
Let $u$ be any non-border node. The proof is done in three steps:
\begin{enumerate}
\item \textit{First step: At least 16 colors are needed to color node $u$ and $\mathcal{N}(u)$.}\\
\textit{First substep:} node $u$ itself requires a color, denoted $1$ for simplicity reasons, that is not used by any other node in $\mathcal{N}(u)$.\\ 
\textit{Second substep:} any 1-hop neighbor of $u$ is 2-hop neighbor of any other 1-hop neighbor of $u$. It follows that any 1-hop neighbor requires a distinct color. We then get 8 colors, denoted $2$ to $9$, for these 1-hop neighbors.\\ 
\textit{Third substep:} we now consider the 2-hop neighbors. Notice that they are at most 3-hop away of any 1-hop neighbor of $u$. Hence, they cannot reuse the colors $2$ to $9$. Moreover, nodes $p$, $l$, $y$ and $t$ that constitute the four vertices of the 2-square, are 4-hop away, they can use the same additional color $10$. If we consider the upper side of this square, nodes $o$, $n$ and $m$ are at most 2-hop away, they cannot reuse the same color. Three additional colors are needed: colors 11, 12 and 13. We can now color the lower side of this 2-square by reproducing the colors used on the upper side, 4-hop away. We now consider the left side of this 2-square occupied by nodes $q$, $r$ and $s$. These nodes are at most 2-hop away, they cannot reuse the same color. Three additional colors are needed: colors 14, 15 and 16. We can now color the right side of this 2-square by reproducing the colors used on the left side, 4-hop away. Hence, at least 16 colors are needed.
\item \textit{Second step: We build a valid coloring of $u$ and $\mathcal{N}(u)$ with 16 colors.} Concerning the 3-hop neighbors, they occupy the 3-square. We color the upper line of this square by copying the line 4-hop lower. Similarly, the lower line of this square is colored by copying the colors used by the line 4-hop higher. We proceed similarly with the columns: the left column of the square receives the colors of the column 4-hop right. The right column of the square receives the colors of the column 4-hop left. We have completed the coloring without using additional colors. This coloring uses exactly 16 colors, as depicted in Figure~\ref{ColorNeighbor3hopwith1.5Fig}.
\item \textit{Third step: This coloring can be regularly reproduced to constitute a valid coloring of the grid.} We consider the origin at node $u$. 
Observing the coloring depicted in Figure~\ref{ColorNeighbor3hopwith1.5Fig}, we notice that any color found at coordinates $(x,y)$ is also found:
\begin{itemize}
\item in the same line, at nodes $(x+4,y)$ and $(x-4,y)$,
\item in the same column, at nodes $(x,y+4)$ and $(x,y-4)$,
\item in the same diagonal, at nodes $(x+4, y-4)$, $(x-4,y-4)$, $(x-2,y+2)$ and $(x-2, y-2)$.
\end{itemize}
We then get a coloring of the grid with exactly 16 colors. We prove that this coloring is valid by checking that any color $1$ to $16$ is reused neither 1-hop, nor 2-hop, nor 3-hop away. It follows that this coloring is valid. Hence, an optimal coloring requires exactly 16 colors.
\end{enumerate}
\endproof

\begin{property}
A basic color pattern of the grid with a transmission range equal to 1.5 times the grid step is given by: 
\begin{verbatim}
   10 11 12 13
   14  5  4  3
   15  6  1  2 
   16  7  8  9 
\end{verbatim}
\end{property}
\proof
We can extract from the coloring of the grid a basic color pattern containing exactly 16 colors. This pattern is periodically reproduced to generate the coloring of the grid. Each color is reproduced according to the rules given previously. 
\endproof

\subsubsection{Transmission range = 2 x grid step}

Assuming a transmission range equal to 2 grid units, we notice that each node $u$ that is not a border one has exactly at the communication sense:
\begin{itemize}
\item 12 1-hop neighbors: such nodes belong to the 1-diamond or 2-diamond, totalizing 4+8=12 nodes. 
\item 28 2-hop neighbors: such nodes belong to the 3-diamond or 4-diamond, totalizing 12+16=28 nodes.
\item 44 3-hop neighbors: such nodes belong to the 5-diamond or 6-diamond, totalizing 20+24=44 nodes.
\end{itemize}
For any non-border node $u$, its neighborhood up to 3-hop, $\mathcal{N}(u)$ is illustrated in Figure~\ref{Neighbor3hopwith2Fig}.
\begin{figure}[!h]
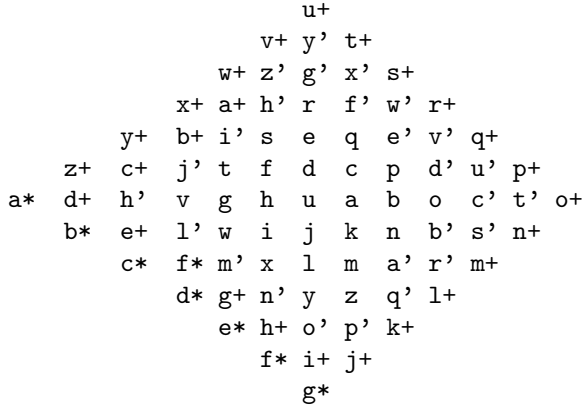

\begin{verbatim}
                     u+
                  v+ y' t+
               w+ z' g' x' s+
            x+ a+ h' r  f' w' r+
        y+  b+ i' s  e  q  e' v' q+
    z+  c+  j' t  f  d  c  p  d' u' p+
a*  d+  h'  v  g  h  u  a  b  o  c' t' o+ 
    b*  e+  l' w  i  j  k  n  b' s' n+ 
        c*  f* m' x  l  m  a' r' m+
            d* g+ n' y  z  q' l+
               e* h+ o' p' k+
                  f* i+ j+
                     g*
\end{verbatim}
\caption{Neighborhood up to 3-hop of node $u$, $R=2$.}
\label{Neighbor3hopwith2Fig}
\end{figure}

\begin{theorem}
An optimal coloring of a grid with a transmission range twice the grid unit needs exactly 25 colors. An example of optimal coloring is given by the following color pattern:
\begin{figure}[!h]
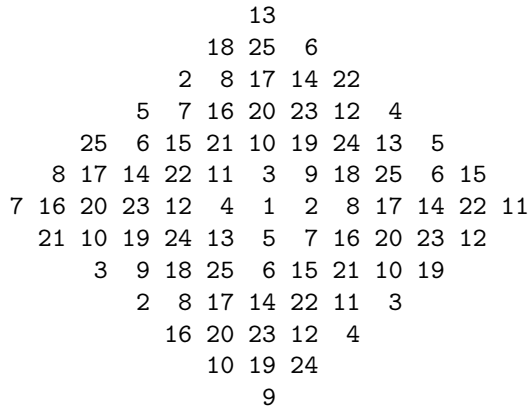

\begin{verbatim}
                  13
               18 25  6
             2  8 17 14 22
          5  7 16 20 23 12  4
      25  6 15 21 10 19 24 13  5
    8 17 14 22 11  3  9 18 25  6 15
 7 16 20 23 12  4  1  2  8 17 14 22 11
   21 10 19 24 13  5  7 16 20 23 12
       3  9 18 25  6 15 21 10 19
          2  8 17 14 22 11  3
            16 20 23 12  4
               10 19 24
                   9
\end{verbatim}
\caption{Coloring of $u$ and its neighborhood up to 3-hop, R=2.}
\label{ColorNeighbor3hopwith2Fig}
\end{figure}
\end{theorem}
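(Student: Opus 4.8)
The plan is to keep the three-step scheme already used for $R=1$ and $R=1.5$ — first prove that $25$ colours are necessary, then exhibit a valid $25$-colouring of $u$ together with $\mathcal{N}(u)$, and finally check that the displayed pattern tiles the grid — while exploiting a metric simplification special to $R=2$. With a transmission range equal to twice the grid step, two grid nodes are $1$-hop neighbours exactly when their Manhattan distance $d_1$ is at most $2$ (the range admits every node at $L_1$ distance $\le 2$ and excludes every node at $L_1$ distance $3$, since already $(2,1)$ has Euclidean length $\sqrt5>2$). Hence the hop distance of any two nodes equals $\lceil d_1/2\rceil$, so two nodes must get different colours in a valid $3$-hop colouring \emph{if and only if} $d_1\le 6$, and the whole question reduces to the chromatic number of the distance graph on $\mathbb{Z}^2$ joining every pair of nodes with $L_1$ distance in $\{1,\ldots,6\}$. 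The $12$/$28$/$44$ neighbour counts quoted just before the theorem are exactly the partition of this ball by $d_1\in\{1,2\}$, $d_1\in\{3,4\}$, $d_1\in\{5,6\}$.

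For the lower bound I would exhibit an explicit $25$-clique, namely the $L_1$-ball of radius $3$ centred at $u$: node $u$, its $12$ one-hop neighbours (the $1$- and $2$-diamonds) and the $12$ two-hop neighbours forming the $3$-diamond, which total $1+4+8+12=25$ nodes. Any two of them are at $L_1$ distance at most $6$, hence within $3$ hops, so they require $25$ distinct colours; this forces $\ge 25$ colours and — beyond what the section claims — does so for \emph{every} colouring, not just periodic ones. The same fact can equally be phrased in the substep style of the earlier theorems: $u$ alone needs one colour; no two of its $12$ one-hop neighbours may share a colour, being pairwise within $2$ hops; and none of the $12$ nodes of the $3$-diamond may reuse any of those $13$ colours or one another's, being within $3$ hops of $u$, of every one-hop neighbour and of each other.

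For the upper bound I would check that the pattern of Figure~\ref{ColorNeighbor3hopwith2Fig} is precisely the lattice colouring $c(x,y)=(x,y)\bmod\Lambda$ for the sublattice $\Lambda=\langle(4,3),(3,-4)\rangle$ of $\mathbb{Z}^2$; one reads this off the figure by following where each colour recurs, e.g. colour $2$ sits at $(1,0)$, at $(-3,-3)=(1,0)-(4,3)$ and at $(-2,4)=(1,0)-(3,-4)$. Since $\Lambda$ has index $25$ in $\mathbb{Z}^2$ (its determinant is $\pm25$) the colouring uses exactly $25$ colours. Moreover $a(4,3)+b(3,-4)$ has Euclidean norm $5\sqrt{a^2+b^2}$, so the only nonzero lattice vectors of Euclidean norm below $6$ are $\pm(4,3)$ and $\pm(3,-4)$ (norm $5$), each of $L_1$ norm $7$, while every other nonzero lattice vector has Euclidean norm at least $5\sqrt2$, hence $L_1$ norm at least $8$; in all cases the $L_1$ norm is $\ge 7>6$. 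Therefore two equally-coloured nodes lie at $L_1$ distance $\ge 7$, i.e. at hop distance $\ge 4$, so the colouring is a valid $3$-hop colouring. The third step is then immediate, as for $R=1$ and $R=1.5$: being a full-rank sublattice, $\Lambda$ has a fundamental domain (a basic pattern of $25$ colours) that tiles the plane under the translations of $\Lambda$, with repetition vectors $(4,3)$, $(3,-4)$ and their combinations such as $(7,-1)$ and $(1,7)$. Together with the lower bound, this shows the optimal periodic colouring uses exactly $25$ colours.

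The main difficulty here is bookkeeping rather than anything conceptual: identifying the period lattice from the figure and certifying that none of its nonzero vectors has $L_1$ norm $5$ or $6$, plus the usual clean handling of the periodic extension with border nodes ignored. Once the identity ``hop distance $=\lceil d_1/2\rceil$'' is in hand, the rest is routine; the only real risks are an arithmetic slip in the clique count $1+4+8+12=25$ or in the lattice determinant.
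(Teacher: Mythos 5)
Your proof is correct, and it reaches the same conclusion by a genuinely different, more algebraic route than the paper. The paper's proof proceeds in three constructive steps: a lower bound obtained by counting colours needed for $u$, its $12$ one-hop neighbours (pairwise $2$-hop) and the $12$ nodes of the $3$-diamond (pairwise within $3$ hops), then an explicit completion of the colouring by copying the sides of the $4$-, $5$- and $6$-diamonds from opposite sides $4$ hops away, and finally an asserted exhaustive check that the pattern, reproduced by the translations $(\pm4,\pm3)$, $(\pm3,\mp4)$, never repeats a colour within $3$ hops. You instead first establish the metric identity (specific to $R=2$) that one-hop adjacency is exactly $L_1$ distance at most $2$, hence hop distance equals $\lceil d_1/2\rceil$; your lower bound is then the $25$-node clique given by the $L_1$-ball of radius $3$ (the same nodes the paper counts, but packaged so that the bound visibly holds for \emph{all} colourings, slightly strengthening the claim), and your upper bound identifies the pattern of Figure~\ref{ColorNeighbor3hopwith2Fig} as the colouring modulo the index-$25$ sublattice $\Lambda=\langle(4,3),(3,-4)\rangle$ and certifies validity in closed form: every nonzero vector of $\Lambda$ has Euclidean norm $5\sqrt{a^2+b^2}$, hence $L_1$ norm at least $7>6$, so equally coloured nodes are at least $4$ hops apart. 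This replaces the paper's diamond-by-diamond construction and its unspecified case check with a short norm computation, and it in fact anticipates the paper's own Vector Method (the generators $(4,3)$ and $(-3,4)$ with $25$ colours appear in Table~\ref{TableVectorGrid}); what you lose relative to the paper is only the explicit recipe showing how the figure itself is built outward from $\mathcal{N}(u)$, which your lattice description supplies implicitly.
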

\proof
Let $u$ be any non-border node. The proof is done in three steps:
\begin{enumerate}
\item \textit{First step: At least 25 colors are needed to color node $u$ and $\mathcal{N}(u)$.}\\
\textit{First substep:} node $u$ itself requires a color, denoted $1$ for simplicity reasons, that is not used by any other node in $\mathcal{N}(u)$.\\ 
\textit{Second substep:} any 1-hop neighbor of $u$ is 2-hop neighbor of any other 1-hop neighbor of $u$. It follows that any 1-hop neighbor requires a distinct color. We then get 12 colors, denoted $2$ to $13$, for these 1-hop neighbors.\\ 
\textit{Third substep:} we now consider the 2-hop neighbors. Notice that they are at most 3-hop away of any 1-hop neighbor of $u$. Hence, they cannot reuse the colors $2$ to $13$. Moreover, nodes that are on the 3-diamond, are at most 3-hop away from each other, they cannot reuse their colors. Hence, they need 12 additional colors: colors $14$ to $25$. It follows that at least 25 colors are needed to colour $u$ and $\mathcal{N}(u)$. 
\item \textit{Second step: We build a valid coloring of $u$ and $\mathcal{N}(u)$ with 25 colors.} 
We now consider the 2-hop neighbors of $u$ that belong to the 4-diamond. The upper-left side of this diamond can be colored by reproducing the colors of the lower right side of the 3-diamond, 4-hop away. Similarly, the lower right side of the 4-diamond can be colored by reproducing the colors of the upper-left side of 3-diamond. We color the upper-right side of the 4-diamond with the colors of the lower left side of the 3-diamond and the lower left side of the 4-diamond with the colors of the upper-right side of the 4-diamond.
We have colored node $u$ and all its nodes up to 2-hop with exactly 25 colors.\\

Concerning the 3-hop neighbors, they occupy the 5-diamond and 6-diamond. We first color the 5-diamond as follows: for its upper-left side, we reproduce the color of the lower-right side of the 2-diamond, similarly with the lower-right side reproducing the upper-left side of the 2-diamond. We proceed similarly for the upper-right and lower-left sides. We can now color the 6-diamond by reproducing colors used in the diagonals including the sides of the 1-diamond. We then obtain a valid coloring with exactly 25 colors, as depicted in Figure~\ref{ColorNeighbor3hopwith2Fig}.
\item \textit{Third step: This coloring can be regularly reproduced to constitute a valid coloring of the grid.} We consider the origin at node $u$. 
Observing the coloring depicted in Figure~\ref{ColorNeighbor3hopwith2Fig}, we notice that any color found at coordinates $(x,y)$ is also found at nodes $(x-4, y-3)$, $(x+3, y-4)$, $(x-3, y+4)$ and $(x+4, y+3)$.\\
We then get a coloring of the grid with exactly 25 colors. We prove that this coloring is valid by checking that any color $1$ to $25$ is reused neither 1-hop, nor 2-hop, nor 3-hop away. It follows that this coloring is valid. Hence, an optimal coloring requires exactly 25 colors.
\end{enumerate}
\endproof

\begin{property}
A basic color pattern of the grid with a transmission range equal to 2 times the grid step is given by: 
\begin{verbatim}
         20
      21 10 19
   22 11  3  9 18
23 12  4  1  2  8 17
   24 13  5  7 16
      25  6 15
         14 
\end{verbatim}
\end{property}
\proof
We can extract from the coloring of the grid a basic color pattern containing exactly 25 colors. This pattern is periodically reproduced to generate the coloring of the grid. Each color is reproduced according to the rules given previously. 
\endproof

\subsection{Optimal coloring for any transmission range: the Vector Method}
In this section, we want to determine the optimal color number for the $h$-hop coloring of grids with various transmission ranges. 
As previously said, we only consider grid colorings that periodically reproduce a color pattern. 
We consider any node $U$ in the grid as depicted in Figure~\ref{figVM}. By definition of the $h$-hop coloring problem, the color of any node $U$ can be used by any node $V$ if and only if $V$ is more than $h$-hop away. We consider two nodes $V_1$ and $V_2$ that reuse the color of $U$. The parallelogram defined by nodes $U$, $V_1$ and $V_2$ constitutes a color pattern such that no node within this parallelogram reuse the color of $U$. This color pattern, periodically reproduced, must provide a valid coloring (no two nodes up to $h$-hop have the same color).
In order to optimize the spatial color reuse in the grid, the area of the parallelogram defined by $V_1$ and $V_2$ must be the smallest possible. The couples $(U,V_1)$ and $(U,V_2)$ determine two vectors that if independent generate the parallelogram of the color pattern. Hence the name of vector method. We now present this method more in details.\\

\begin{figure}[!h]
\label{figVM}
\begin{center}
{\includegraphics[width=0.6\linewidth]{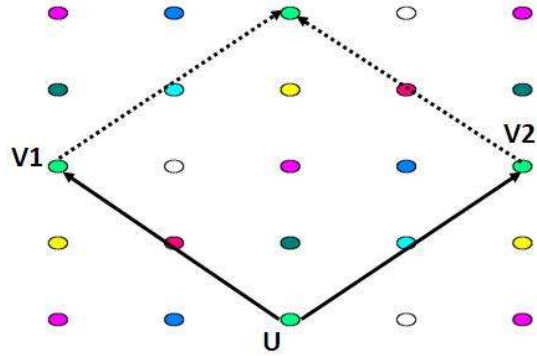}}
\caption{The principles of the vector method}\label{vectorMethodFig}
\end{center}
\end{figure}

We can notice that if $(U,V_1)$ and $(U,V_2)$ are generator vectors then any two vectors $(U,V'_1)$ and $(U,V'_2)$ that are a linear combination of $(U,V_1)$ and $(U,V_2)$ and are not dependant are also generator vectors. That is why, we can consider only the half plane delimited by $y \geq 0$.\\ 

\begin{definition}
With an optimal $h$-hop periodic coloring, the color of the origin node $U$ is reproduced at nodes $V_1$ and $V_2$ with coordinates $(x_1,y_1)$ and $(x_2,y_2)$ 
minimizing the determinant:\\
\begin{equation}
\label{equationminimize}
\mid x_1y_2 -x_2y_1 \mid
\end{equation}
under the constraints:\\
\begin{System}
y_1 \geq 0\\
y_2 \geq 0\\
x_1y_2-x_2y_1 \neq 0\\
V_1 \notin \bigcup_{k=1}^h k-hop(U)\\
V_2 \notin \bigcup_{k=1}^h k-hop(U)\\
V_2 \notin \bigcup_{k=1}^h k-hop(V_1)
\label{equationconstraint}
\end{System}
Vectors $(U,V_1)$ and $(U,V_2)$ are the two vectors used to generate the periodic color pattern of the $h$-hop coloring.
\end{definition}
System~\ref{equationminimize} means that $V_1$ and $V_2$ generate a parallelogram of the smallest possible area meeting the constraints of System~\ref{equationconstraint}. The first two constraints of System~\ref{equationconstraint} $y1 \geq 0$ and $y2 \geq 0$ show that we restrict our study to the half plane $y\geq0$. The third constraint $x_1y_2-x_2y_1 \neq 0$ expresses that the two vectors $(U,V_1)$ and $(U,V_2)$ are independent. The fourth constraint $V_1 \notin \bigcup_{k=1}^h k-hop(U)$ expresses that $V_1$ can reuse the color of $U$. Idem for the fifth constraint with $V_2$. The last constraint expresses that $V_1$ and $V_2$ can use the same color.

\begin{lemma}
\label{Noothernode}
No other node strictly within the parallelogram defined by $U$, $V_1$ and $V_2$ uses the same color as $U$.
\end{lemma}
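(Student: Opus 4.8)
The plan is to prove the lemma by contradiction, using directly the minimality built into the definition of the pair $(V_1,V_2)$. Take $U$ as the origin, so that $V_1=(x_1,y_1)$ and $V_2=(x_2,y_2)$ are the two generator vectors and the parallelogram ``defined by $U$, $V_1$ and $V_2$'' is the set $\{\alpha V_1+\beta V_2:\ 0\le\alpha\le 1,\ 0\le\beta\le 1\}$. Since $V_1$ and $V_2$ are independent (constraint $x_1y_2-x_2y_1\neq 0$), every point $W$ of this parallelogram has a unique representation $W=\alpha V_1+\beta V_2$, and $W$ lies \emph{strictly within} the parallelogram exactly when $0<\alpha<1$ and $0<\beta<1$. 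Suppose, for contradiction, that some node $W=(x_W,y_W)$ is strictly within the parallelogram and carries the same color as $U$; I will show that the pair $(V_1,W)$ satisfies all the constraints of System~(\ref{equationconstraint}) but has a strictly smaller determinant than $(V_1,V_2)$.

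First I would check feasibility of $(V_1,W)$. The $y$-coordinates are nonnegative: $y_1\ge 0$ is given, and $y_W=\alpha y_1+\beta y_2\ge 0$ since $\alpha,\beta\ge 0$ and $y_1,y_2\ge 0$. The two vectors are independent: a one-line computation gives $x_1y_W-x_Wy_1=\beta\,(x_1y_2-x_2y_1)$, which is nonzero because $\beta>0$ and $x_1y_2-x_2y_1\neq 0$. For the three color-reuse constraints I would use that an optimal $h$-hop coloring is in particular valid, so two distinct nodes with the same color are strictly more than $h$ hops apart. Applying this to $\{W,U\}$ gives $W\notin\bigcup_{k=1}^h k-hop(U)$; applying it to $\{V_1,U\}$ gives $V_1\notin\bigcup_{k=1}^h k-hop(U)$; applying it to $\{V_1,W\}$ (which are distinct and share $U$'s color) gives $W\notin\bigcup_{k=1}^h k-hop(V_1)$. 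Hence $(V_1,W)$ meets every constraint of System~(\ref{equationconstraint}).

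Then I would compare objective values: $\mid x_1y_W-x_Wy_1\mid=\mid\beta\mid\cdot\mid x_1y_2-x_2y_1\mid<\mid x_1y_2-x_2y_1\mid$, because $0<\beta<1$ and $x_1y_2-x_2y_1\neq 0$. This contradicts the assumption that $(V_1,V_2)$ minimizes the determinant~(\ref{equationminimize}) among all pairs satisfying~(\ref{equationconstraint}). Therefore no node strictly within the parallelogram reuses the color of $U$.

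I do not anticipate a genuine obstacle: this is essentially the two-dimensional ``minimal generating pair'' phenomenon, and the heart of the argument is the single identity $x_1y_W-x_Wy_1=\beta(x_1y_2-x_2y_1)$. The only points requiring care are (i) phrasing ``strictly within the parallelogram'' correctly as $0<\alpha,\beta<1$ in the $(V_1,V_2)$-coordinates, and (ii) the observation that sharing a color forces a hop-distance greater than $h$, which uses nothing beyond the validity of an optimal coloring. Should one also want to handle a node $W$ lying on an open edge of the parallelogram (one of $\alpha,\beta$ equal to $0$ or $1$), the pair $(V_1,W)$ may be dependent; in that case one argues identically with the pair $(W,V_2)$, whose determinant equals $\mid\alpha\mid\cdot\mid x_1y_2-x_2y_1\mid<\mid x_1y_2-x_2y_1\mid$, the feasibility check being unchanged.
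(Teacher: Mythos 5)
Your proof is correct and follows essentially the same route as the paper: a contradiction between the validity of the coloring (which forces $W$ to be more than $h$ hops from $U$ and $V_1$) and the minimality of the determinant, realized by the smaller parallelogram spanned by $(U,V_1)$ and $(U,W)$. The paper merely phrases this as a two-case sketch (either $W$ is far from all four vertices, contradicting minimality, or it is close to one, contradicting validity), whereas you fold the second case into the validity assumption and add the explicit feasibility check and the identity $x_1y_W-x_Wy_1=\beta(x_1y_2-x_2y_1)$, which makes the argument tighter but not different in substance.
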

\proof By contradiction, let us assume there exists $W$ a node strictly within the parallelogram defined by $U$,$V_1$ and $V_2$ that uses the same color as $U$. We distinguish two cases:
\begin{itemize}
\item $W$ is strictly more than $h$ hops away from $U$, $V_1$, $V_2$ and $V_3$ the fourth vertex of the parallelogram. In such conditions, the vectors $(U,V_1)$ and $(U,W)$ would form a parallelogram whose area is strictly smaller than this defined by $(U,V_1)$ and $(U,V_2)$. A contradiction with the generator vectors.
\item $W$ is at most $h$ hops away from at least one node among $U$, $V_1$, $V_2$ and $V_3$. Hence, $W$ cannot reuse the color: a contradiction of $h$-hop coloring. 
\end{itemize}
\endproof

\begin{property}
\label{numberofcolors}
For any node $U$, the color pattern defined by nodes $U$, $V_1$ and $V_2$ meeting the equation system \ref{equationminimize} under the constraints \ref{equationconstraint}, is periodic and contains exactly $\mid x_1y_2-x_2y_1 \mid$ colors.
\end{property}
\proof According to Lemma~\ref{Noothernode}, the color of the origin node $U$ is used only at nodes $W$, such that $(U,W)$ is a linear combination of $(U,V_1)$ and $(U,V_2)$. The coordinates $(x,y)$ of $W$ verify:\\
\begin{System}
x = \alpha x_1 + \beta x_2\\ 
y = \alpha y_1 + \beta y_2
\label{eq:alpha-beta}
\end{System}
where $\alpha \in Z$ and $\beta \in Z$. With these vectors, we can define a new grid, whose axis are given by $(U,V_1)$ and $(U,V_2)$, such that only nodes in this grid have the color of $U$.
Hence, the density of $U$'s color is equal to $1/d$, with $d=\mid x_1y_2 -x_2y_1 \mid$. Hence, we need $d$ colors to color all nodes.
\endproof

\begin{property}
\label{mycolor}
According to the color pattern defined by vectors $(U,V_1)$ and $(U,V_2)$ with $d= \mid x_1y_2-x_2y_1 \mid$, the color of any node $W(x,y)$ is determined by the couple $(c_1,c_2)$:
\begin{System}
\label{colorcompute}
c_1= \mid xy_1-yx_1 \mid\ modulo\ d\\
c_2= \mid xy_2-yx_2 \mid\ modulo\ d
\end{System}
There is a bijective mapping between the couple $(c_1,c_2)$ and a color $\in [1,d]$.
\end{property}
\proof
This property is deduced from property~\ref{numberofcolors}, by solving 
for $\alpha$ and $\beta$ in equation~\ref{eq:alpha-beta}
and then expressing the constraint that they must be integers.
\endproof

\subsection{Examples of vectors}

Table~\ref{TableVectorGrid} gives for different radio ranges two vectors generating the optimal periodic pattern as well as the minimal number of colors obtained by a periodic pattern, for both a 2-hop coloring and a 3-hop coloring. The '*' symbol highlights the optimality of the number of colors used.
\begin{table}[!h]
\caption{Vectors generating the optimal periodic pattern and optimal number of colors.}
\label{TableVectorGrid}
\begin{center}
\begin{tabular}{|c||c|c|c||c|c|c|}
\hline
Radio & \multicolumn{3}{|c||}{2-hop coloring}& \multicolumn{3}{|c|}{3-hop coloring}\\
           \cline{2-7}
range& vector1& vector2& colors& vector1& vector2& colors\\
\hline \hline
1& (2,1) & (-1,2)& 5*&(2,2) & (-2,2)& 8*\\
\hline
1.5& (-3,0)&(3,0)& 9*&(4,0) & (0,4)& 16*\\
\hline
2& (3,2)& (-2,3)& 13* &(4,3) & (-3,4)& 25*\\
\hline
2.5&(4,3)& (-1,5)& 23* &(5,5) & (-7,2)& 45*\\
\hline
3& (5,3)& (-1,6)& 33* &(7,5) & (-8,4)& 68*\\
\hline
3.5&(5,4)& (-6,3)& 39*& (8,5) & (-8,5)& 80*\\
\hline
4& (7,3)& (66,5)& 53*&(8,8) & (-11,3)& 112*\\
\hline
4.5& (9,2) & (-6,7)& 75*&(13,3) & (-9,10)& 157*\\
\hline
5& (9,4)& (-1,10) & 94*&(14,4) & (3,15)& 198*\\
\hline
5.5& (9,6)& (-1,11)&105*&(16,0) & (8,14)& 224*\\
\hline
6&(11,4)& (-9,8)&124* & (17,4)& (-12,13)& 269*\\
\hline
6.5&(13,1)& (-7,11)&150*&(-19,0)&(9,17)& 323*\\
\hline
7& (10,9)&(-4,13)& 166*&(15,13) &(-19, 7)& 352*\\
\hline
\end{tabular}
\end{center}
\end{table}

\subsection{Reduction of the number of vectors to test}
We can use some properties of the grid topology to reduce the number of vector computations. We need to characterize the set of neighbor nodes of the grid node $U$ more precisely.
\begin{property}
\label{1hopprop}
For any transmission range $R \geq 1$, for any grid node $U$, any node $V$ such that its euclidian distance $d(U,V)$ meets $0<d(U,V)\leq R$ belongs to 1-hop$(U)$.
\end{property}
\proof By definition of the radio range and assuming an ideal environment with symmetric links.
\endproof
\begin{lemma}
\label{distmaxg}
For any integer $h\geq1$, for any node $V$ such $d(U,V) \leq hR$, there exists a node $V'$ in the grid such that $d(V,V') \leq \sqrt{2}/2$.
\end{lemma}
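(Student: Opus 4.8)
The plan is to reduce the statement to the elementary geometric fact that the covering radius of the unit integer lattice is $\sqrt{2}/2$: every point of the plane lies within Euclidean distance $\sqrt{2}/2$ of some lattice point. First I would fix the coordinate system already used in Section~\ref{Theoretical}, in which the grid nodes are exactly the points with integer coordinates, and regard $V$ as a point $(a,b)$ of the Euclidean plane. If $V$ is itself a grid node there is nothing to prove (take $V' = V$), so the only case of interest is when at least one of $a$, $b$ is non-integral.

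Next I would set $V' := (a^{*}, b^{*})$, where $a^{*}$ and $b^{*}$ are integers nearest to $a$ and to $b$ respectively. Then $|a - a^{*}| \le 1/2$ and $|b - b^{*}| \le 1/2$, so by the Pythagorean theorem $d(V,V') = \sqrt{(a-a^{*})^{2} + (b-b^{*})^{2}} \le \sqrt{1/4 + 1/4} = \sqrt{2}/2$, which is exactly the claimed bound. That is the entire geometric content of the lemma, and it is where I would spend essentially no effort.

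The one point deserving a line of care — and what I would flag as the main (minor) obstacle — is checking that $V'$ is genuinely a node \emph{of the grid}, rather than a lattice point lying just outside its boundary. This is where the hypothesis $d(U,V) \le hR$ enters: since $U$ is a non-border node sitting well inside the grid and $V$ is within the bounded distance $hR$ of $U$, rounding $V$ to $V'$ displaces it by at most $\sqrt{2}/2$, so $V'$ remains inside the grid; in the idealized, arbitrarily large (effectively infinite) grid underlying the periodic colorings studied here this is automatic. I would make this explicit so that the lemma can be combined cleanly with Property~\ref{1hopprop} in the subsequent argument that bounds the set of vectors needing to be tested.
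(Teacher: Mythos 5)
Your argument is correct and is essentially the paper's own: the authors also observe that the worst case is a point at the center of a grid cell, equidistant ($\sqrt{2}/2$) from the nearest grid nodes, which is exactly your nearest-integer rounding bound $\sqrt{(1/2)^2+(1/2)^2}=\sqrt{2}/2$ made explicit. Your extra remark about $V'$ staying inside the (effectively infinite) grid is a harmless clarification the paper leaves implicit.
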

\proof
Let us consider any node $V$ such that $d(U,V) \leq hR$.
In the worst case, this node occupies the center of the grid cell. It is at equal distance of two grid nodes that are diagonally opposed. Hence, its distance to one of them is equal to $\sqrt{2}/2$.
\endproof
\begin{property}
\label{2hopprop}
For any transmission range $R \geq 1$, for any node $U$, any node $V$ such that $R<d(U,V)\leq 2R$ belongs to either 2-hop$(U)$ or 3-hop$(U)$.
\end{property}
\proof For any node $V$ such that $R<d(U,V)\leq 2R$, we distinguish 2 cases:
\begin{itemize}
\item first case: $\exists W \in 1-hop(U)$ such that $d(V,W) \leq R$. Since $W \notin 1-hop(U)$, it follows that $V \in 2-hop(U)$.
\item second case: any 1-hop neighbor of $U$ is at a distance strictly higher than $R$ from $V$. Let $W$ be the 1-hop neighbor of $U$ the closest to $V$. We have $d(V,W) > R$, by definition. Let $Z$ be the 1-hop neighbor of $W$ the closest to $V$. We have $d(Z,V) \leq R$. Hence, $U$ reaches $V$ in three hops via nodes $W$ and $Z$.
\end{itemize}
\endproof

\begin{property}
\label{hhopprop}
For any grid node $U$, any node $V$ that meets $d(U,V) \leq (R-\sqrt{2})h$ is at most $h$-hop away from $U$.
\end{property}
\proof
We define the $h-1$ nodes that allow us to divide the distance $d(U,V)$ in $h$ equal parts.\\ Let $W_i$ be these nodes, with $i \in [1,h-1]$.\\ For any $i \in [1,h-1]$, let $W'_i$ the grid point the closest to $W_i$. For simplicity reason, we denote $W'_0=U$ and $W'_h=V$. We have $d(U,V) \leq \sum_{i=0}^{h-1} d(W'_i,W'_{i+1})$.\\ 
We have  $d(W'_i,W'_{i+1})\leq d(W'_i,W_i) + d(W_i,W_{i+1}) + d(W_{i+1},W'_{i+1})$.
According to Lemma~\ref{distmaxg}, we have $d(W_i, W'_i) \leq \sqrt{2}/2$. Hence, we get $d((W'_i,W'_{i+1})\leq \sqrt{2}+ d(W_i,W_{i+1})$. By construction, $d(W_i,W_{i+1})=d(U,V)/h$. Hence, $d((W'_i,W'_{i+1})\leq \sqrt{2}+ d(U,V)/h$.\\ 
If $\sqrt{2}+ d(U,V)/h \leq R$ then $d(U,V) \leq (R-\sqrt{2})h$.
Hence, $V$ is at most $h$-hop away from $U$.
\endproof

According to Property \ref{hhopprop}, we can notice that within the annulus delimited by the disks entered at $U$ and of radius $h(R-\sqrt{2})$ and $hR$, there exist nodes that are not $h$-hop nodes of $U$. These nodes are good candidates for $V_1$ and $V_2$. Hence, an heuristic is to reduce the set of possible solutions for $V_1$ and $V_2$ to only nodes $V$ that meet $h(R-\sqrt{2})<d(U,V)\leq hR$ and do not belong to the $h$-hop neighborhood of $U$. The new problem becomes:
\begin{definition}
Use a $h$-hop periodic coloring, where the color of the origin node $U$ is reproduced at nodes $V_1$ and $V_2$ with coordinates $(x_1,y_1)$ and $(x_2,y_2)$ 
minimizing:\\
\begin{equation}
\label{newequationminimize}
\mid x_1y_2 -x2y_1 \mid
\end{equation}
under the constraints:\\
\begin{System}
\label{newequationconstraint}
y_1 \geq 0\\
y_2 \geq 0\\
x_1y_2-x_2y_1 \neq 0\\
h^2(R-\sqrt{2})^2 < x_1^2+y_1^2 \leq h^2R^2\\
h^2(R-\sqrt{2})^2 < x_2^2+y_2^2 \leq h^2R^2\\
V_1 \notin h-hop(U)\\
V_2 \notin h-hop(U)\\
V_2 \notin \bigcup_{k=1}^h k-hop(V_1)
\end{System}
\end{definition}

\subsection{How to apply the Vector Method}
The Vector Method allows us to determine the optimal $h$-hop color pattern of any grid, with any transmission range. We can notice that any permutation of an optimal color pattern is still an optimal one. It follows that we can color each node within the parallelogram defined by the two generator vectors according to for example the line order within this parallelogram. We then get an optimal periodic color pattern.\\

More precisely, each node proceeds as follows:\\
1. Each node in the grid computes the two generator vectors. It is also possible that a central unit computes the two generator vectors and distributes this information to all nodes in the grid.\\ 
2. Each node colors each grid node in the parallelogram defined by the two generator vectors, following for instance the line order.\\ 
3. Knowing its coordinates in the grid, each node deduces the two components $c_1$ and $c_2$ according to property~\ref{mycolor}. It then deduces its own color from the color assigned to the node within the parallelogram with the same values of $c_1$ and $c_2$.\\

\begin{property}
The Vector Method provides the optimal number of colors for a periodic $h$-hop coloring of any grid, with any transmission range. It allows each node to know its color in a single round.
\end{property}

\subsection{Bounds of the number of colors in periodic colorings}

In this section, we give a lower and an upper bound of the number of colors
needed in a $h$-hop coloring of the grid under the previously given conditions.
The bounds apply to the vector method, or any other method.
Combined in theorem~\ref{th:asymptotic}, the number of colors of 
optimal coloring when $R \rightarrow \infty$ is shown to be asymptotically
$\frac{\sqrt{3}}{2}h^2R^2 + O(R)$. 

Notice that this compares to a periodic coloring of a true hexagonal lattice 
which would yield a number of colors equal to $\frac{\sqrt{3}}{2}h^2R^2$,
and this is the best possible even when not constraining the nodes to be
located on a grid (see the circle packing in the proof of the next theorem).

For the lower bound, we have the following theorem:
\begin{theorem}
The number of colors required to color an infinite grid is at least 
$\frac{\sqrt{3}}{2}h^2(R-\sqrt{2})^2$
\end{theorem}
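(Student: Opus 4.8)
The bound I want is a lower bound on the number of colors in \emph{any} valid periodic $h$-hop coloring of the infinite grid. The natural strategy is a \emph{density / area} argument: fix one color class $C$, and show that the nodes carrying color $C$ must be "spread out" in the sense that any two of them are more than $h$ hops apart. Using Property~\ref{hhopprop}, two grid nodes $V,W$ with $d(V,W) \le (R-\sqrt{2})h$ are within $h$ hops, hence cannot share a color. So the nodes of color $C$ form a point set in the plane in which every pairwise distance exceeds $\rho := (R-\sqrt{2})h$. Equivalently, the open disks of radius $\rho/2$ centered at these nodes are pairwise disjoint. The number of colors is then at least the reciprocal of the maximum density of such a separated point set among the grid nodes, and that density is controlled by circle packing.

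\textbf{Key steps in order.} First I would record the separation claim: if a periodic coloring uses color $C$ at grid nodes $P_1$ and $P_2$ with $P_1 \ne P_2$, then $d(P_1,P_2) > \rho$, which is the contrapositive of Property~\ref{hhopprop}. Second, I would set up the density bookkeeping. Because the coloring is periodic, each color class has a well-defined asymptotic density $\delta_C$ (number of nodes of color $C$ per unit area), and $\sum_C \delta_C = 1$ since the grid has one node per unit area; hence the number of colors is at least $1/\max_C \delta_C$. Third — the packing step — I would bound $\max_C \delta_C$: placing a disjoint open disk of radius $\rho/2$ at each node of the color class, the packing density of these disks is at most the optimal circle-packing density in the plane, $\pi/(2\sqrt{3})$. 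Since each disk has area $\pi\rho^2/4$, we get $\delta_C \cdot \tfrac{\pi \rho^2}{4} \le \tfrac{\pi}{2\sqrt{3}}$, i.e. $\delta_C \le \tfrac{2}{\sqrt{3}\,\rho^2}$. Fourth, invert: the number of colors is at least $1/\max_C\delta_C \ge \tfrac{\sqrt{3}}{2}\rho^2 = \tfrac{\sqrt{3}}{2}h^2(R-\sqrt{2})^2$, which is the claimed bound.

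\textbf{Where the difficulty lies.} The arithmetic is trivial; the real content is making the density argument rigorous for a periodic coloring on a discrete grid. Specifically: (i) justifying that a periodic coloring assigns each color class a genuine density and that these densities sum to $1$ — this is a straightforward averaging over one fundamental domain of the period lattice; and (ii) justifying the packing bound when the disk centers are constrained to lie on $\mathbb{Z}^2$ rather than being free in the plane. The constraint only makes the achievable density \emph{smaller}, so the continuous circle-packing bound $\pi/(2\sqrt3)$ remains a valid upper bound on $\delta_C \cdot \pi\rho^2/4$; one should state this monotonicity explicitly rather than wave at it. A minor subtlety: the disks must be \emph{open} (so that centers exactly at mutual distance $\rho$ would be allowed), but since Property~\ref{hhopprop} forbids distance $\le \rho$ rather than $< \rho$, closed disks of radius $\rho/2$ are in fact disjoint, and either convention gives the same bound. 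I would also note, as the paragraph preceding the theorem already hints, that this matches the hexagonal-lattice coloring up to the $(R-\sqrt2)$ versus $R$ discrepancy, which is exactly the slack that the companion upper bound will close to yield the asymptotic $\frac{\sqrt3}{2}h^2R^2 + O(R)$ of Theorem~\ref{th:asymptotic}.
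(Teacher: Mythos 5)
Your proposal is correct and follows essentially the same route as the paper's proof: the separation of a color class by $\rho=(R-\sqrt{2})h$ via Property~\ref{hhopprop}, disjoint disks of radius $\rho/2$, the optimal circle-packing density $\pi/\sqrt{12}$ (Thue--T\'oth), and inversion of the resulting density bound. Your extra care about the per-color densities summing to $1$ and the open-versus-closed disk convention only makes explicit what the paper leaves implicit.
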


\begin{proof}
Consider $h$-hop coloring of the grid. 
Consider a fixed color $c$, and now let $S_c$ be the set of nodes with
this color. 

We first establish a lower bound of the distance of nodes
in $S_c$. 
Let us define $\rho = (R-\sqrt{2})h$.
Consider two nodes $A,B$ of $S_c$. By contradiction: if their distance
verified $d(A,B) \leq \rho$, from property~\ref{hhopprop},
they would be at most $h$-hop away, contradicting the definition of a 
$h$-hop coloring. 
Therefore, all nodes of $S$ are at a distance at least $\rho$ from
each other. 

Now consider the set of circles ${\cal C}$ of 
radius $\frac{1}{2}\rho$
and whose centers are the nodes of $S$. The fact that any two of nodes 
of $S$ are distant of more $\rho$, implies that none of the circles 
in ${\cal C}$ overlap. Hence ${\cal C}$ is a \emph{circle packing} 
by definition. From the Thue-T\'oth theorem 
\cite{t1910,t1943} establishing that the hexagonal circle packing 
is the densest packing, with a density of $\frac{\pi}{\sqrt{12}}$, 
we deduce that ${\cal C}$
must have a lower or equal packing density. 
This implies an upper bound of the density of set $S_c$ of centers of the disks
of $\frac{1}{(\rho/2)^2\sqrt{12}}$.

Because every color yields a set of nodes with at most this density, it
follows a lower bound of the number of colors that is the inverse of this
quantity, hence the theorem.

\end{proof}

For an upper bound, we have the following theorem:
\begin{theorem}
The number of colors required to color an infinite grid is at most
$\frac{\sqrt{3}}{2}h^2R^2 + 2 hR + (2+hR) \sqrt{2}$
\end{theorem}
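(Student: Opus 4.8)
The plan is to \emph{exhibit} an explicit periodic $h$-hop coloring whose number of colors is at most the claimed quantity; since the Vector Method is optimal among periodic colorings, the bound then holds for it, and for ``any other method'' in the sense of the statement.

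The starting observation is elementary: by definition of the radio range a $1$-hop neighbour of a node lies at Euclidean distance at most $R$, so by the triangle inequality every node that is $k$-hop away from $U$ for some $k\le h$ lies at Euclidean distance at most $hR$ from $U$. Hence any coloring of the grid in which two grid points of the same color are always at Euclidean distance strictly greater than $hR$ is automatically a valid $h$-hop coloring; it suffices to produce such a coloring that is periodic and cheap. Following the Vector Method, I take the color of a grid point $x$ to depend only on the class of $x$ modulo a sublattice $\Lambda=\mathbb{Z}v_1+\mathbb{Z}v_2\subseteq\mathbb{Z}^2$. Then two points share a color iff their difference lies in $\Lambda$, so the coloring is valid as soon as the shortest nonzero vector of $\Lambda$ has length strictly greater than $hR$, and by Property~\ref{numberofcolors} it uses exactly $|x_1y_2-x_2y_1|=\mathrm{covol}(\Lambda)$ colors. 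The problem thus reduces to finding a sublattice of $\mathbb{Z}^2$ with minimal vector $>hR$ and covolume at most $\tfrac{\sqrt3}{2}h^2R^2+2hR+(2+hR)\sqrt2$.

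For the construction I would start from the ideal planar triangular (hexagonal) lattice $T$ generated by $\beta_1=D'(1,0)$ and $\beta_2=D'(\tfrac12,\tfrac{\sqrt3}{2})$, with the parameter $D'$ taken just above $hR+\sqrt2$; its minimal vectors are exactly $\pm\beta_1,\pm\beta_2,\pm(\beta_1-\beta_2)$, all of length $D'$, and $\mathrm{covol}(T)=\tfrac{\sqrt3}{2}D'^2$. I then let $v_1,v_2$ be the grid points nearest $\beta_1,\beta_2$, writing $v_i=\beta_i+e_i$ with $\|e_i\|\le\tfrac{\sqrt2}{2}$ (in fact $\|e_1\|\le\tfrac12$, since $\beta_1$ lies on an axis). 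Two checks remain. First, $\Lambda=\mathbb{Z}v_1+\mathbb{Z}v_2$ still has minimal vector $>hR$: the only potentially short vectors $v_1,v_2,v_1-v_2$ arise from $T$-vectors of length $D'$ by adding a perturbation of norm at most $\|e_1\|+\|e_2\|\le\sqrt2$, hence have length $\ge D'-\sqrt2>hR$; whereas a general $\alpha v_1+\beta v_2$ with $|\alpha|+|\beta|=n\ge3$ satisfies $\|\alpha\beta_1+\beta\beta_2\|\ge\tfrac n2 D'$ (because $\alpha^2+\alpha\beta+\beta^2\ge\tfrac14 n^2$), so its length is at least $\tfrac n2(D'-\sqrt2)\ge\tfrac32 hR>hR$. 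Second, bounding $\det(v_1,v_2)=\det(\beta_1,\beta_2)+\det(\beta_1,e_2)+\det(e_1,\beta_2)+\det(e_1,e_2)$ by a sum of products of lengths gives $\mathrm{covol}(\Lambda)\le\tfrac{\sqrt3}{2}D'^2+O(D')$; substituting $D'=hR+\sqrt2$ and collecting the lower-order terms yields the stated bound.

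The main obstacle is squeezing those lower-order terms down to \emph{exactly} $2hR+(2+hR)\sqrt2$ rather than merely $O(hR)$: this needs care about which generator's rounding error one charges $\tfrac12$ versus $\tfrac{\sqrt2}{2}$ to, and possibly a joint choice of the orientation of $T$ and of the rounding of the second generator so as not to waste margin. A minor nuisance is the strict inequality $\lambda_1(\Lambda)>hR$, since with $D'=hR+\sqrt2$ the vector $v_1-v_2$ is only guaranteed length $\ge hR$; this is handled either by enlarging $D'$ by an arbitrarily small amount (which leaves the leading and first-order terms unchanged) or by rotating $T$ by a generic infinitesimal angle so that no rounding error attains its extreme value. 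Once both points are settled, validity follows from the opening observation and the color count from Property~\ref{numberofcolors}, giving a periodic $h$-hop coloring — hence in particular one attainable by the Vector Method — with at most the claimed number of colors. Note finally that the leading term $\tfrac{\sqrt3}{2}h^2R^2$ matches the lower bound $\tfrac{\sqrt3}{2}h^2(R-\sqrt2)^2$ up to lower order, so the two bounds together pin down the asymptotics asserted in theorem~\ref{th:asymptotic}.
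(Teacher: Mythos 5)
Your overall strategy is the same as the paper's: build a near-hexagonal sublattice of $\mathbb{Z}^2$ whose nonzero vectors are longer than $hR$, and count colors as its covolume. But as written your argument has a genuine gap, and you name it yourself: you only obtain $\frac{\sqrt{3}}{2}h^2R^2+O(hR)$, not the stated bound. Worse, your specific construction cannot be ``squeezed'' into it: by inflating the hexagonal lattice to scale $D'\approx hR+\sqrt{2}$ before rounding, the main term alone becomes $\frac{\sqrt{3}}{2}(hR+\sqrt{2})^2=\frac{\sqrt{3}}{2}h^2R^2+\sqrt{6}\,hR+\sqrt{3}$, and adding the rounding cross-terms $\bigl(\tfrac{\sqrt{2}}{2}+\tfrac12\bigr)D'+\tfrac{\sqrt{2}}{4}$ gives a linear coefficient of about $\sqrt{6}+\tfrac{1+\sqrt{2}}{2}\approx 3.66$, strictly larger than the claimed $2+\sqrt{2}\approx 3.41$. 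So the theorem's exact right-hand side is not established; you prove a weaker inequality (enough for Theorem~\ref{th:asymptotic}, but not for this statement), and the ``joint choice of orientation and rounding'' you invoke to fix it is left entirely unspecified.

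The paper avoids paying the $\sqrt{2}$ safety margin quadratically by choosing the generators asymmetrically. Let $A=(hR,0)$ and let $B$ be the point at distance $hR$ from $U$ at angle $\pi/3$; take $V_2$ to be the grid point obtained by rounding both coordinates of $B$ \emph{upward} (so $\|UV_2\|\ge hR$ automatically and $d(B,V_2)\le\sqrt{2}$), and set $V_1=(2x_2,0)$, so that $V_1-V_2=(x_2,-y_2)$ is the mirror image of $V_2$ and is also long enough, while $d(A,V_1)\le 2$. Validity then needs no inflated scale at all, and bilinearity of the determinant gives
$\det(UV_1,UV_2)=\det(UA,UB)+\det(AV_1,UB)+\det(UV_1,BV_2)\le\frac{\sqrt{3}}{2}h^2R^2+2hR+(2+hR)\sqrt{2}$,
i.e., the rounding errors enter only through cross terms with vectors of length about $hR$, never through squaring an enlarged lattice parameter. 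To repair your proof you would need to adopt some such directional rounding (or another device guaranteeing validity without rescaling the hexagon by $\sqrt{2}$); your correct observations — distance $>hR$ suffices for validity, the covolume counts the colors, and the $\alpha^2+\alpha\beta+\beta^2\ge\frac14(|\alpha|+|\beta|)^2$ estimate controlling non-minimal lattice vectors — can all be kept.
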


\begin{proof}

We proceed with a constructive proof, exhibiting two valid vectors which yield 
the result, using an approximation of an hexagonal lattice.

The figure~\ref{fig:hexagon-choice} illustrates how some points $V_2$ and
$V_1$ are constructed. 
\begin{figure}[!h]
\begin{center}
{\includegraphics[width=0.6\linewidth]{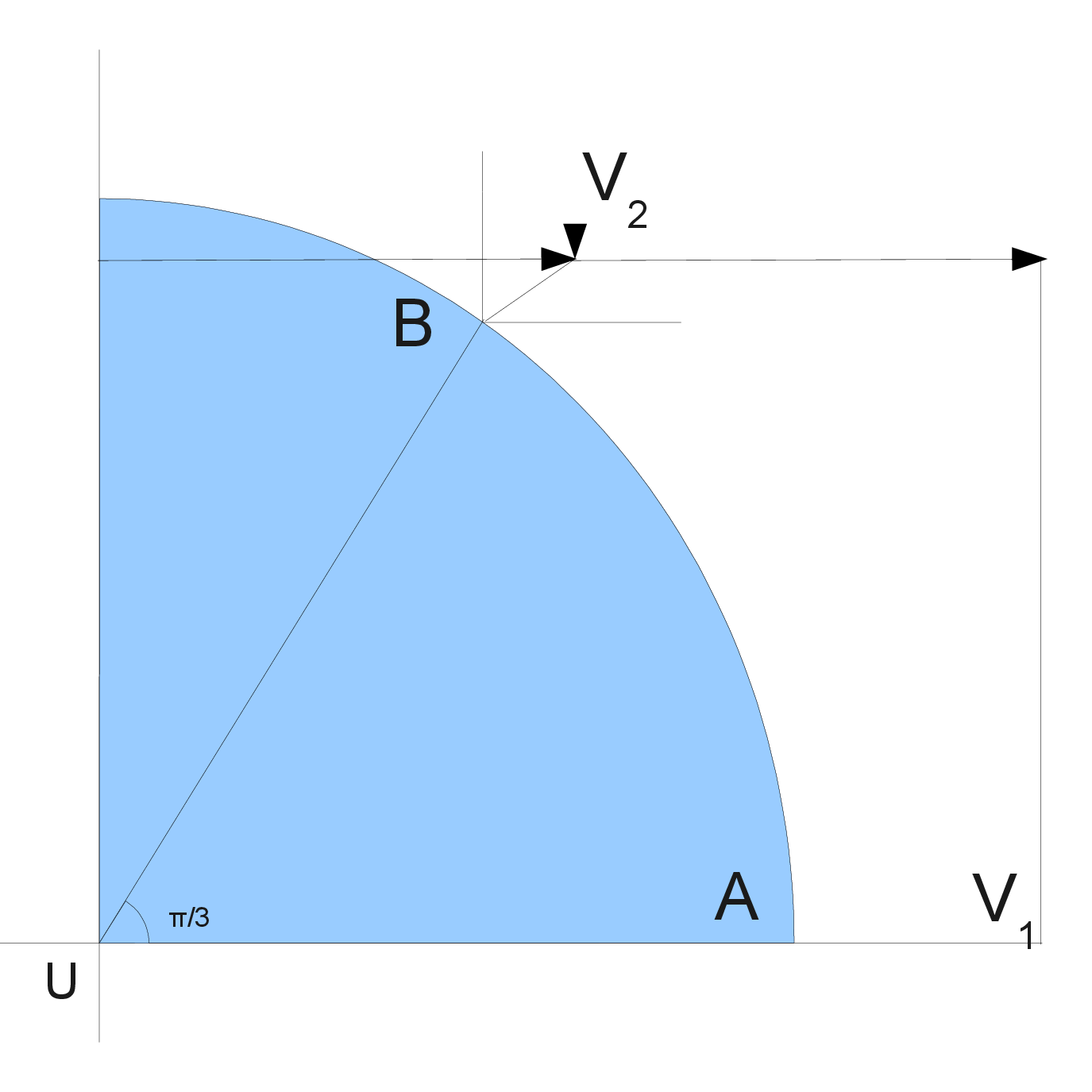}}
\caption{Selecting vectors for a near-hexagonal lattice}\label{fig:hexagon-choice}
\end{center}
\end{figure}
\begin{itemize}
\item Starting from the point $U$, the line with an 
angle $\frac{\pi}{3}$ with the horizontal line is considered,
and its intersection with the circle of radius $h R$ yields the point $B$.
\item Next, the closest point of $B$ on the grid
  with larger $x$ and also $y$ coordinates, is sought and is $V_2$
  (coordinates $(x_2, y_2)$). 
\item Then $V_1$ with coordinates $(x_1, y_1)$ is selected with
  $(x_1,y_1) = (2x_2, 0)$.
\end{itemize}

Notice that by construction $x_1 \ge h R$, and we have a valid choice of
vectors $(U,V_1)$ and $(U,V_2)$.

By construction:
$d(B,V_2) \le \sqrt{2}$ and $d(A,V_1)~\le~2$.

Using the general notations $MN$ to represent a vector $(M,N)$, 
and $det(OM,ON)$ to represent the determinant of two vectors,
we can write $n_c$, the number of the color in the associated coloring as:
\begin{eqnarray*}
n_c & = & det(UV_1,UV_2) \\
    & = & det(UA,UB) + det(AV_1,UB) + det(UV_1,BV_2) \\
    & \le & det(UA,UB) + d(A,V_1)d(U,B) + d(U,V_1)d(B,V_2) \\
    & \le & \frac{\sqrt{3}}{2}h^2R^2 + 2 hR + (2+hR) \sqrt{2}
\end{eqnarray*}

\end{proof}

The two previous theorems show that the number of colors
of an optimal periodic $h$-hop coloring of the grid
(with the vector method) is close the number of colors in an absolutely optimal
coloring of an hexagonal lattice,
at least asymptotically. This result may be summarized as:
\begin{theorem}
The number of colors $n_c(R) $of an optimal periodic $h$-coloring  for a fixed
verifies:
$$n_c(R) = \frac{\sqrt{3}}{2}h^2R^2 (1 + O(\frac{1}{R}))$$
when $R \rightarrow \infty$
\label{th:asymptotic}
\end{theorem}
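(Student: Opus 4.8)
The plan is to combine the two previous bounds directly. From the lower bound theorem, we have $n_c(R) \ge \frac{\sqrt{3}}{2}h^2(R-\sqrt{2})^2$, and from the upper bound theorem, $n_c(R) \le \frac{\sqrt{3}}{2}h^2R^2 + 2hR + (2+hR)\sqrt{2}$. Both bounds sandwich $n_c(R)$, so the strategy is simply to expand both expressions and show they each equal $\frac{\sqrt{3}}{2}h^2R^2$ plus lower-order terms in $R$ (that is, terms that are $O(R)$ as $R \to \infty$, with $h$ fixed).

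First I would expand the lower bound: $\frac{\sqrt{3}}{2}h^2(R-\sqrt{2})^2 = \frac{\sqrt{3}}{2}h^2R^2 - \sqrt{3}\,\sqrt{2}\,h^2 R + \sqrt{3}h^2 = \frac{\sqrt{3}}{2}h^2R^2 + O(R)$, where the implied constant depends only on $h$. Similarly, the upper bound is already written as $\frac{\sqrt{3}}{2}h^2R^2 + (2 + \sqrt{2})hR + 2\sqrt{2} = \frac{\sqrt{3}}{2}h^2R^2 + O(R)$. Therefore, writing $n_c(R) = \frac{\sqrt{3}}{2}h^2R^2 + E(R)$, both bounds force $E(R) = O(R)$, and dividing through by $\frac{\sqrt{3}}{2}h^2R^2$ gives $n_c(R) = \frac{\sqrt{3}}{2}h^2R^2\bigl(1 + O(\frac{1}{R})\bigr)$, which is exactly the claimed asymptotic.

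There is essentially no real obstacle here — the theorem is a packaging statement that reformulates the previous two bounds in asymptotic notation, and the only care needed is to be explicit that the hidden constants in $O(\cdot)$ may depend on $h$ (which is held fixed), since otherwise the claim would be false uniformly in $h$. I would also remark that the matching leading terms of the two bounds show the constant $\frac{\sqrt{3}}{2}h^2$ is the correct one and cannot be improved, so the result is tight in the leading order. If one wanted a cleaner statement one could note that the lower bound already has leading term $\frac{\sqrt{3}}{2}h^2R^2$ after expansion, so the squeeze is immediate without any further estimate.

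\begin{proof}
Fix $h$. By the lower bound theorem, $n_c(R) \ge \frac{\sqrt{3}}{2}h^2(R-\sqrt{2})^2$. Expanding, $\frac{\sqrt{3}}{2}h^2(R-\sqrt{2})^2 = \frac{\sqrt{3}}{2}h^2R^2 - \sqrt{6}\,h^2 R + \sqrt{3}\,h^2$, so $n_c(R) \ge \frac{\sqrt{3}}{2}h^2R^2 - \sqrt{6}\,h^2 R$. By the upper bound theorem, $n_c(R) \le \frac{\sqrt{3}}{2}h^2R^2 + 2hR + (2+hR)\sqrt{2} = \frac{\sqrt{3}}{2}h^2R^2 + (2+\sqrt{2})hR + 2\sqrt{2}$. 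Combining, there is a constant $C$ depending only on $h$ such that
\begin{equation*}
\left| n_c(R) - \frac{\sqrt{3}}{2}h^2R^2 \right| \le C R
\end{equation*}
for all $R$ large enough. Dividing by $\frac{\sqrt{3}}{2}h^2R^2$ yields $n_c(R) = \frac{\sqrt{3}}{2}h^2R^2\bigl(1 + O(\frac{1}{R})\bigr)$ as $R \rightarrow \infty$, which is the claim.
\end{proof}
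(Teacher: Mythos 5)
Your proposal is correct and follows exactly the paper's route: the paper proves this theorem in one line by combining the lower and upper bound theorems, and you simply supply the (routine) expansion showing both bounds are $\frac{\sqrt{3}}{2}h^2R^2 + O(R)$ for fixed $h$, which yields the stated asymptotic.
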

\begin{proof}
Combining the lower bound and the upper bound of the two theorems
yields the result.
\end{proof}

\subsection{Coloring results with the Vector Method}
Table~\ref{TableVMVariousPrio} depicts the results obtained with the vector method for various grids, with various radio ranges. The priority of any node is given by the couple $c_1,c_2)$ as defined in property~\ref{mycolor}.
Results are given for 3-hop coloring. The '*' symbol highlights the optimality of the number of colors used.

\begin{table}[!h]
\caption{Number of colors obtained for 3-hop coloring.}
\label{TableVMVariousPrio}
\begin{center}
\begin{tabular}{|c|c|c|c|}
\hline
Radio range R& Grid size& priority assignment& colors\\
\hline \hline
1&10x10& line& 8* \\
\cline{3-4}
 && column & 8*\\
\cline{3-4}
 && vector & 8*\\
\cline{2-4}
&20x20& vector& 8*\\
\cline{2-4}
&30x30& vector& 8*\\
\cline{2-4}
&50x50& vector& 8*\\
\hline
1.5&10x10& line & 16*\\
\cline{3-4}
 && column & 16*\\
\cline{3-4}
 && vector & 16*\\
 \cline{2-4}
&20x20& vector& 16*\\
\cline{2-4}
&30x30& vector& 16*\\
\cline{2-4}
&50x50& vector& 16*\\
\hline
2&10x10& line & 30\\
\cline{3-4}
 && column & 30\\
\cline{3-4}
 && vector & 25*\\
 \cline{2-4}
&20x20& vector& 25*\\
\cline{2-4}
&30x30& vector& 25*\\
\cline{2-4}
&50x50& vector& 25*\\
\hline
3& 20x20& vector & 68*\\
\cline{2-4}
&30x30& vector& 68*\\
\hline
\end{tabular}
\end{center}
\end{table}
We observe that the vector method provides an optimal three-hop coloring, for any radio range. This is not true for any other priority assignment tested. Moreover, the number of colors does not depend on the grid size. Similar results have been obtained for 2-hop coloring.

\section{Improvement of SERENA with theoretical results \label{Improvement}}
Our goal is now to optimize the number of colors obtained by SERENA using the theoretical results obtained in the previous section. We will act on node priority assignment. How can each node compute its priority in order to minimize the number of colors used? 
We now show how to compute node priority to allow SERENA to obtain an optimal coloring of a grid.
We try different node priority assignments:
line, column, diagonal, distance to the grid center and vector.
Table~\ref{TableSerenaVariousPrio} depicts the simulation results obtained with SERENA for various grids with various radio ranges.
The '*' symbol highlights the optimality of the number of colors used.

\begin{table}[!h]
\caption{Number of colors obtained with SERENA.}
\label{TableSerenaVariousPrio}
\begin{tabular}{|c|c|c|c|c|}
\hline
Radio range R& Grid size& priority assignment& colors & rounds\\
\hline
1&10x10& line& 8* & 58\\
\cline{3-5}
 && column & 8* & 58\\
\cline{3-5}
 && vector & 8* & 21\\
\cline{2-5}
&20x20& vector& 8*& 21\\
\cline{2-5}
&30x30& vector& 8*& 21\\
\cline{2-5}
&50x50& vector& 8*& 21\\
\hline
1.5&10x10& line & 16*& 91\\
\cline{3-5}
 && column & 16* & 91\\
\cline{3-5}
 && vector & 16* & 38\\
 \cline{2-5}
&20x20& vector& 16*& 38\\
\cline{2-5}
&30x30& vector& 16*& 38\\
\cline{2-5}
&50x50& vector& 16*& 38\\
\hline
2&10x10& line & 30 & 85\\
\cline{3-5}
 && column & 30 & 85\\
\cline{3-5}
 && vector & 25* & 52\\
 \cline{2-5}
&20x20& vector& 25*& 56\\
\cline{2-5}
&30x30& vector& 25*& 61\\
\cline{2-5}
&50x50& vector& 25*& 68\\
\hline
3& 20x20& vector & 68* & 179\\
\cline{2-5}
&30x30& vector& 68*& 184\\
\hline
\end{tabular}
\end{table}
We observe that SERENA provides an optimal three-hop coloring with the priority assignment based on vectors, for any radio range. This is not true for any other priority assignment tested. Moreover, the number of colors does not depend on the grid size. Furthermore, the impact of the grid size on the number of rounds is very limited.

\section{Conclusion\label{Conclusion}}
In this paper we have proved complexity of the $h$-hop node coloring problem. We have then optimized SERENA a 3-hop node coloring algorithm for dense networks without sacrificing the coloring delay. We have then focused on specific case of dense networks: grids with a radio range higher than the grid step. We have established theoretical results about grid coloring. We have proposed the Vector Method for assigning colors to nodes such as sensors organized in grid. We have also given lower and upper bounds of the number of colors in periodic colorings. As a further work, we will show how to map a grid on a given random topology and determine the best grid adapted to this topology.

\end{document}